\newtheorem{lemma}{Lemma}
\newtheorem{rem}{Remark}
\newtheorem{theorem}{Theorem}
\newtheorem{definition}{Definition}
\theoremstyle{definition}
\def\bb0{{\mathbb{0}}}
\def\bb{{\mathbf{b}}}
\def\br{{\mathbf{r}}}
\def\b0{{\mathbf{0}}}
\def\b1{{\mathbf{1}}}
\def\cL{\mathcal{L}}
\def\cN{\mathcal{N}}
\def\cP{\mathcal{P}}
\def\sfA{\mathsf{A}}
\def\sfE{\mathsf{E}}
\def\sfO{\mathsf{O}}
\def\sfP{\mathsf{P}}
\def\sfc{{\mathsf{c}}}
\def\sf0{{\mathsf{0}}}
\def\sfE{\mathsf{E}}
\begin{document}





\title{Online Energy Harvesting Problem Over An Arbitrary Directed Acyclic Graph Network}

%
%
%
\author{\IEEEauthorblockN{Rahul Vaze\\}
\IEEEauthorblockA{
Tata Institute of Fundamental Research\\
 Homi Bhabha Road, Mumbai 400005\\
Email: rahul.vaze@gmail.com\\}
\and 
\IEEEauthorblockN{Sibi Raj B Pillai \\ }
\IEEEauthorblockA{Indian Institute of Technology  Bombay\\
 Mumbai, India\\
Email:bsraj@ee.iitb.ac.in }

}

\maketitle

\def\tp{t^{\prime}}
\begin{abstract}
A communication network modelled by a directed acyclic graph (DAG) is considered, over which a source 
wishes to send a specified number of bits to a destination node. Each node of the DAG is powered by a
separate renewable energy source, and the harvested energy is used to facilitate the source destination 
data flow. The challenge here is to find the optimal rate and power allocations across time for each node 
on its outgoing edges so as to minimize the time by which the destination receives a
 specified number of bits. 
An online setting is considered where an algorithm only has causal information about the energy arrivals.
Using the competitive ratio as the performance metric,  i.e. 
the ratio of the cost of the online algorithm and  the optimal offline algorithm, maximized over all inputs, a {\it lazy} online algorithm with a competitive ratio of $2+\delta$ for any $\delta>0$  is proposed. Incidentally,
$2$ is also a lower bound to the competitive ratio of any online algorithm for this problem.  
Our lazy online algorithm is described and analyzed via defining a novel max-flow problem over 
a DAG, where the  rate on the subset of outgoing edges of any node are related/constrained. 
An optimal algorithm to find max-flow  with  these constraints is also provided, which may be
of independent interest.
\end{abstract}
\section{Introduction}
Enabling communication nodes to harvest energy from nature makes them robust, and enhances their lifetime. Moreover, it 
also makes the communication {\it green}. One challenge, however, is that the energy arrivals from nature are 
inherently uncertain, and the communication algorithms have to adapt to the randomness of energy availability. 
This paradigm (called energy harvesting or EH) presents fresh challenges in designing optimal communication 
algorithms, and in the past few years, there has been lot of work towards that direction.

In this paper, we consider a source-destination pair that is connected via an arbitrary directed acyclic graph (DAG).  The DAG models a communication network setting where direct communication is
possible from each node to its first hop neighbor, i.e. via the edge. We assume that the edges 
are orthogonal, i.e. the links do not mutually  interfere. However, the DAG topology
is otherwise arbitrary. 
%
%
%
%
Each node of the DAG is powered by EH, where the amount and the time 
instants of energy arrivals are assumed to be arbitrary. We consider the online setting, where any 
algorithm has only causal information about energy arrivals, and the objective of the algorithm is to transport 
a specified number of bits from the source to the destination in as minimum a time  as possible, using the 
energy arrivals at the respective nodes of the DAG. We call this problem as the {\it delay-minimization} problem.

In prior work, delay-minimization as well as the related rate-maximization problems have been considered for a 
small number of nodes \cite{basic Ulukus,mac, comp_ratio, off_on, Ayfer16, Ashwini, D(onlinemac),onlinetwoway(F), Ulukus2011}, \cite{vazeJSAC, Ozgur2015,Stuber2015,Ulukus2016} such as point-to-point, a single unicast with multiple relays, 
MAC channel with multiple transmit nodes etc. Prior work primarily addresses
the {\it offline} setting, while fewer results are known in the online setting \cite{Ayfer16,Ashwini, D(onlinemac),onlinetwoway(F), vazeJSAC}.

By offline, we mean that the algorithm has non-causal information about all energy arrivals in the future. 
To the best of our knowledge there has been no work on the delay minimization problem for an arbitrary DAG  either in 
the offline or the online setting, as considered in this paper.
The main challenge in a network setting is that the optimal energy utilization at different nodes is inter-dependent, 
making the problem challenging for an arbitrary network topology.  

To quantify the performance of an online algorithm, the concept of competitive ratio is used, 
that is defined as 
the ratio of the cost of an online algorithm and the optimal offline algorithm, maximized over the worst case input. 
This metric might appear too pessimistic, however, in prior work there has been success in finding online 
algorithms that have competitive ratio of at most $2$ for the delay-minimization  in a two 
node problem \cite{comp_ratio, off_on}. Moreover, in \cite{off_on} it is shown that 
no online algorithm can have competitive ratio better than $2$, even for a two node network.

In this paper, we first propose an online algorithm for the delay-minimization problem over an 
arbitrary DAG and show that its  competitive ratio is at most $2+\delta$ for any $\delta>0$.
The computational complexity of the algorithm is $\sfO(\log(1/\delta)), 0< \delta < 1$
(Big O notation). 
Thus, close to optimal competitive ratio can be obtained by choosing a small $\delta$ while 
paying a very minor penalty in the complexity, since the competitive ratio is lower bounded by $2$ even for a two node network \cite{off_on}.  
We would like to point out that even the optimal offline algorithm for the delay 
minimization problem over a DAG with EH nodes is unknown, and challenging to find,
 given the arbitrary network topology. Nevertheless, we show that a suitable lazy algorithm is constant competitive in the online setting, notwithstanding the fact that the optimal offline algorithm is unknown. 

In order to explain the competitive performance of our online lazy algorithm,
assume that the optimal offline algorithm, starting from time $0$, completes the transmissions in $t_0$ (unknown) units of time.  The main idea of the online algorithm is to estimate $t_0$ reasonably closely. Let this estimate be ${\hat t}_0 = t_0 + \delta$. Suppose there exists an online  algorithm which can 
transfer the specified number of bits while transmitting only in the
time duration $[{\hat t}_0,t_0+{\hat t}_0]$ using only the energy that arrives in the interval 
$[0, {\hat t}_0]$. Clearly, such an algorithm is energy feasible, and moreover its competitive ratio is $\frac{t_0 + {\hat t}_0}{t_0}$. This will lead to a $2+\delta$-competitive algorithm. 

In short, the online 
algorithm needs to ensure two things, \textit{viz.} (i) ${\hat t}_0 = t_0 + \delta$ for some fixed $\delta$, (ii) transport $B_o$ bits while transmitting only within the time interval 
$\left[\hat t_0, \hat t_0 + t_0 \right]$.
To accomplish both these tasks, we take recourse to a {\it novel}
max-flow problem over a DAG. Recall that with classical max-flow problem, 
given the capacity for each edge of the network, the maximum flow possible between
a source node and its destination is to be determined. A more generalized version of 
this is when there are constraints on different subsets of outgoing edges from 
each node. For example, in the \emph{polymatroidal} max-flow problem \cite{Lawler, Chekuri2015}, 
the set of rates possible on outgoing links of any node  are defined as the intersection 
of hyperplanes. However, a close inspection of our problem reveals that the rate constraints
are not polymatroidal. In particular,  letting rate to be logarithmic in power using the Shannon rate formula, if the out-degree of a node is $2$ with total power $P$, then the rate constraints on the two outgoing links will result in a 
region $(r_1,r_2) = (\log(1+\alpha P), \log(1+(1-\alpha) P))$ as shown in Fig. \ref{fig:rateregion} that is non-polymatroidal, whose boundary is traced by $0\leq \alpha \leq 1$.
Thus the set of possible rates on two outgoing
links subject to a common power constraint is not polymatroidal. This calls for an alternate
approach to the max-flow problem here.

We show that if the offline optimal algorithm can communicate $B_o$ bits by time $t_0$ 
(clearly using only the energy that has arrived till time $t_0$), the optimal 
max-flow solution can maintain a rate of at least $\frac{B_0}{t_0}$ for the 
time interval $[t_0, 2t_0]$, while using only the 
energy that has arrived till time $t_0$. 
Thus, employing the max-flow solution from time $t_0$ till $2t_0$ can also finish transmission 
of $B_o$ bits. The only remaining task is to estimate $t_0$ closely. Fortunately, one can 
find an upperbound to $t_0$ by solving a max-flow problem at every energy arrival.
The upperbound itself is at most $2 t_0$,  thus a further line search can  find the actual $t_0$.
The latter has complexity  $\sfO(\log(\frac{t_0}{\delta}))$ to obtain an estimate within $t_0+\delta$.



The defined max-flow problem is important in its own right since it advances the literature on flow maximization. When compared to the classical and the polymatroidal max-flow, the novel properties of the considered max-flow problem include that the min-cut capacity is not equal to the max-flow. Thus, usual augmenting path algorithms~\cite{Lawler} are insufficient to find the optimal flow.

Our contributions are as follows:
\begin{itemize}
\item For an arbitrary DAG, we present a $2+\delta$ (for any $\delta>0$)-competitive online algorithm for the delay minimization problem. Since $2$ is a lower bound on the competitive ratio for any online in a $2$-node network case, choosing $\delta$ small gives an almost optimal online algorithm for the DAG network.
\item We define and solve a novel max-flow problem over a DAG network with non-polymatroidal outflow constraints, which
is of independent interest in max-flow literature with edge and node constraints. 
Using the fact that a DAG with orthogonal links is equivalent to a layered network, we propose an iterative algorithm for solving the max-flow problem on the layered network, that tries to find max-flow on each layer recursively, and is shown to be optimal.
\item One limitation of our results on DAGs is that all edges are assumed to be orthogonal. For a special case of a DAG, a layered network, we show that the max-flow problem can be solved even while incorporating
 polymatroidal interference constraints on incoming edges at nodes, e.g. non-orthogonal MAC constraints.
\end{itemize} 

Rest of the paper is organized as follows.
After detailing the system model in Section~\ref{sec:model}, 
we first connect the delay-minimization problem over a DAG having EH nodes to a
network max-flow problem in Section \ref{sec:nonpoly}. We then
show how the optimal solution of the max-flow problem can be used to define a lazy online 
algorithm for the original delay-minimization problem. 
In addition, we also analyze the competitive ratio of the proposed algorithm.
Thereafter, in Section~\ref{sec:max:flow}, we derive an optimal algorithm 
to solve the non-polymatroidal max-flow problem, a challenging task in its own right, 
since unlike the classical/polymatroidal case, here the max-flow may not be the same as min-cut.
We first consider the
specific case of a three hop layered network in Section~\ref{sec:two:layer}, and generalize  to arbitrary number of layers in Section~\ref{sec:mult:layer}. We present some numerical results to illustrate the performance of the proposed algorithm in Section \ref{sec:sim}. Finally, Section~\ref{sec:conc}
concludes the paper.

\def\lzo{\mathbf{\textsc{LazyOnline}}}

\section{System Model and Objectives} \label{sec:model}

Consider a directed communication network described by a graph $G=(V,E)$, where $V$ is the set of nodes, 
and $E$ the directed set of edges, each connecting a pair of nodes. 
The graph $G$ is assumed to be acyclic, thus, $G$ is a directed acyclic graph (DAG). 
Each node is assumed to be full-duplex. The half-duplex case is fundamentally different and more challenging even for a two-node network \cite{Ashwini}. 
For node~$k$, let $I_k$ denote the set of nodes from which there are edges incident to it, and $O_k$ represent the
set of nodes to which there are outgoing edges from $k$. Direct communication is possible between any pair of nodes 
only if they have an edge between them. Moreover, communication over distinct edges is 
orthogonal and does not interfere with each other. One example of a considered network is provided in Fig. \ref{fig:eg:model}, where node $1$ (the source) wishes to communicate with node $6$ (the destination) via nodes $2,3,4,5$ that are connected via directed orthogonal edges.
We also consider some generalization to
non-orthogonal links in Section \ref{sec:lay},  for a special topology of the DAG
called the layered network.

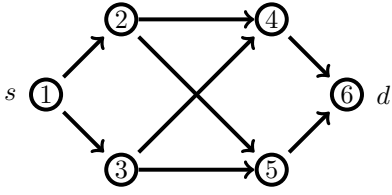
\begin{figure}[h]
\begin{center}
\begin{tikzpicture}[line width=1.5pt, every line/.style={->}, scale=1,
lab/.style={yshift=0.25cm}]
\coordinate (c1) at (0,0); \coordinate (c2) at (1,1); \coordinate (c3) at (1,-1);
\coordinate (c4) at (3,1); \coordinate (c5) at (3,-1); \coordinate (c6) at (4,0);
\foreach \i in {1,...,6}{
\draw[fill=white] (c\i) circle(0.2cm) node (n\i) {$\i$};
};
\draw (n1) edge[->] (n2) (n1) edge[->] (n3) (n2) edge[->] (n4) (n2) edge[->] (n5) 
	(n3)edge[->](n4) (n3)edge[->](n5) (n4) edge[->] (n6) (n5)edge[->](n6);

\draw (n1) node[above,lab]{} (n2) node[above,lab]{} (n3) node[above,lab]{}
 	(n4) node[above,lab]{} (n5) node[above,lab]{} 
	(n6) node[right, xshift=0.25cm]{$d$} (n1) node[left, xshift=-0.25cm]{$s$};
\end{tikzpicture}
\end{center}
\caption{Example network \label{fig:eg:model}}
\end{figure}

Each node in the network harvests energy from nature to power its communication and stores that 
in a battery of size $C$. Following majority of prior work, for analytical simplicity, we assume 
that $C$ is large enough such that no charge overflow happens~\footnote{The typical battery size for practical EHNs ranges between $200$ mAh-
$2500$ mAh \cite{Batteryjust}. A $200$ mAh capacity battery can deliver $720$ J of energy at
a nominal voltage of 1 V. Also, using a small solar panel, at $66 \%$ efficiency,
NiMH batteries receive $1.3$ mJ of energy per $100$ ms slot. Thus, with two
hours of sunlight, the typical battery size, normalized with respect to Es,
equals $5.33\times 10^5$. Hence, in practice, a battery size of $1000$ is very small.}. 
We assume that an amount $E_{kj}, j \geq 1$ Joules of energy arrives at node~$k$ on the time 
instant  $\tau_{kj}$. Let $\sfA_k(t) = \sum_{\tau_{kj} \le t} E_{kj}$ be the total 
accumulated energy by node $k$ till time $t$. The quantity, as well as time instant, 
of energy arrival is assumed to be  arbitrary, and can even be chosen by an adversary. 
W.l.o.g we will let $\tau_{kj}$ to be  increasing in $j$. The information on all the energy 
arrival processes at different nodes is assumed to be causally available at a central location, 
since we are interested in a centralized solution.


We assume that the rate of transmission over a directed edge $e=(u,v)$ of $G$  when 
node $u$ transmits with power $P$ towards node $v$ is concave in $P$. 
In particular, for ease of exposition, we use  
a logarithmic rate given by the Shannon formula for a normalized AWGN channel,
\begin{align} \label{eq:shan:rate}
r(P) = \log (1 + P) \text{ bits/sec/Hz}.
\end{align} 
All the results presented in the paper hold as long as the  rate function is concave in $P$. 

Let node~$k$ transmit power $P_{kl}(t)$ on edge $(k,l), l\in O_k$ towards node~$l$ at 
time $t$.  The total energy $e_k(s)$ expended by node $k$ till time $s$ is then given by
$$
e_k(s)  = \int_0^s \sum_{l \in O_k} P_{kl}(t) dt.
$$ 
Similarly, let $\sfA_k(s)$ denote the total energy arrived at node~$k$ till instant~$s$. Energy causality
constraints will imply that
$$
e_k(t) \leq \sfA_k(t) , \forall t. 
$$
By \eqref{eq:shan:rate}, the instantaneous rate on the edge $(k,l)$ becomes~\footnote{The validity of this formula in a practical setting is justified while having a sufficient bandwidth
for communication, 
this makes coding and decoding possible within the time-scales of interest~\cite{Telatar95}.} 
$$
r_{kl}(t) = \log (1 + P_{kl}(t)), \forall t.
$$
Let $B_{kl}(s)$
denote the total amount of data (in bits) transported by node~$k$ to its neighboring node~$l$ in the time interval $(0,s]$.
$$
B_{kl}(s) = \int_0^s  r_{kl}(t) dt.
$$

%
%

%

We now define the main objective of this paper, to solve the delay-minimization problem, defined as follows. Consider a source-destination pair $(s, d)$ belonging to $G$. 
Source $s$ wishes to send $B_o$ bits to the destination node $d$ over the edges of graph $G$, and the problem is to minimize 
the time by which $B_o$ bits are received by node~$d$. 

%
%
%
 The delay minimization problem can be written as the following
optimization problem with respect to the power allocation function $P_{kv}(t), v \in O_k$ for each node $k\in V$.
\vspace{0.1in}
\hrule
\begin{align}
\label{eq:opt:1}& \ \ \ \ \ \ \ \ \ \ \min  \ \ T \\
\label{eq:optcond:1} 
\text{s.t.} &\sum_{l \in I_k} B_{lk}(t) \geq \sum_{l \in O_k} B_{kl}(t), \forall \ t, \ \forall \ k \in V \backslash \{s,d\} \\ \label{eq:optcond:2}
				&\sum_{l \in O_{s}} B_{sl}(T) = \sum_{l \in I_{d}} B_{ld}(T) = B_o, \\\label{eq:optcond:3}
 & e_{k}(t) \leq \sfA_k(t), P_{kl}(t)\ge 0, k \in V, l \in O_k.
\end{align}
\hrule
\vspace{0.1in}
\vspace*{0.25cm}
\noindent Here \eqref{eq:optcond:1} denotes the flow conservation constraint for each node other than the source and the destination, i.e., the outgoing flow is at most the incoming flow, \eqref{eq:optcond:2} captures the out-flow and in-flow condition 
for the source and destination since $B_o$ bits are needed to be transported, and \eqref{eq:optcond:3} captures 
the energy neutrality constraint for each node. Notice that the above problem formulation is common
to both offline and online schemes. The former can optimize using the transmission schedules
using the non-causal knowledge of all energy arrival
processes, whereas the latter has to make decisions based on the causal knowledge of the arrival process.

All logarithms in this paper are with respect to base $2$. We will denote $|U|$ for the cardinality of the set $U$.

\begin{table}
\caption{Notation Table}
\begin{tabular}{|c|p{7cm}|}
\hline
Symbol & Notation \\
\hline 
$I_k$ & Set of nodes in layer $\cL_{l-1}$ that have an edge to node $k$ in layer $\cL_{l}$ \\ \hline  
$O_k$ & Set of nodes in layer $\cL_{l+1}$ that have an edge to node $k$ in layer $\cL_{l}$ \\ \hline  
$r_i$ & For a node $i$ of layer $k$, the sum-rate out of node $i$ towards nodes of layer $k+1$  \\ \hline 
$R_k$ & For a layer $k$, the sum of sum-rate $r_i$ out of all all nodes $i$ in layer $k$ \\ \hline 
$f_i$ & For a node $i$ of layer $k$, the total incoming rate from all nodes of layer $k-1$ \\ \hline  
$g_j$ & For a node $i$ of layer $k$, the sum-rate going out of node $i$ towards nodes in layer $k+1$  \\ \hline  
$U$ & set of nodes $i$ of layer $k$ for which $r_i < f_i$  \\ \hline  
$N(S)$ & For a set of nodes $S$ of layer $k$, $N(S)$ is the set of nodes of layer $k-1$ that have an edge to some node in $S$\\ \hline  \hline
\end{tabular}
\end{table}

\section{Online Algorithm and Competitive Ratio}\label{sec:nonpoly}
Solving Problem \eqref{eq:opt:1} is complicated even in the \textit{offline} setting, where all the energy arrivals are known in advance. 
In this paper, we consider the online setting, i.e., any algorithm can only use causal information about the energy arrivals and wants to solve \eqref{eq:opt:1}. To describe the online setting, we need the following notation.

Recall that $\tau_{k_j}$ denote the energy arrival instants at node~$k$. Let us create
a lexicographically increasing sequence of tuples 
$$
\sigma = \{(\tau_{k_j}, k, E_{k_j}), \forall k \in G\}.
$$
Thus, $\sigma$ represents the combined energy arrival sequence on all nodes in the network. 
Let $T_{{\cal A}}(\sigma)$ and $T_{\text{off}}(\sigma)$ be the respective 
total transmission completion times solving \eqref{eq:opt:1}, for the online algorithm ${\cal A}$ 
and the optimal offline algorithm (which will remain unknown), respectively. 
We use the competitive ratio as the performance metric for online algorithms, 
that is defined for an online algorithm ${\cal A}$ as 
\begin{equation}\label{def:cr}
\mu_{\cal A} = \max_\sigma \mu_{\cal A}(\sigma) = \max_\sigma\frac{T_{{\cal A}}(\sigma)}{T_{\text{off}}(\sigma)},
\end{equation}
where the maximum is over all possible energy arrival sequences $\sigma$, that can be chosen even adversarily. The optimal competitive ratio $\mu^\star$ is defined as $\mu^\star = \min_{\cal A}\mu_{\cal A}$ and an algorithm ${\cal A}^\star$ is called an optimal online algorithm  ${\cal A}^\star=\arg \min_{\cal A}\mu_{\cal A}$, i.e, if it achieves the optimal competitive ratio.
Our objective is to find an optimal online algorithm which achieves the minimum competitive ratio, since by definition, an online algorithm with low competitive ratio has good
performance even against adversarial inputs. 

Towards this direction, we first define a related rate maximization problem, 
which turns out  useful while proposing an online algorithm for solving \eqref{eq:opt:1}. 
For a given time $\tp$,  let us construct a scheme in which node~$k$ only uses the accumulated 
energy $\sfA_k(\tp)$. Moreover, node $k$ is also restricted to not transmit 
at all till time $\tp$, and transmit with equal power over time $[\tp, 2\tp]$ using the energy 
$\sfA_k(\tp)$. Thus $P_{kl}(t) = P_{kl}$ for 
the time interval $[\tp, 2\tp]$, and $\sum_{l \in O_k} P_{kl}  = \frac{\sfA_k(\tp)}{\tp}:= \sfP_k(t')$.
The energy neutrality constraint is clearly met at node~$k$, however the power allocation  
$P_{kl}, l \in O_k$ can be further optimized. We can now choose the powers $P_{kl}, \ell \in O_k, \forall k \in G$  to maximize the source destination flow in the interval $[\tp,2\tp]$. 
Thus for each $t=\tp$ we can define a (max-flow) rate maximization problem as follows:
%
\hrule
\begin{align}
 \label{eq:opt:2}
\max 	&  \ \ \ \ \ R(t') \\
\label{eq:optratecond:3}
\text{s.t.}\  & r_{kl} = \log(1+P_{kl}), \sum_{l \in O_{k}} P_{kl} \leq \sfP_{k}(t'), \\
\label{eq:optratecond:1} 
&\sum_{l \in I_k} r_{lk} \geq \sum_{l \in O_k} r_{kl}, \ \forall \ k \in V \backslash \{s,d\} \\ \label{eq:optratecond:2}
				&\sum_{l \in O_{s}} r_{sl} = \sum_{l \in I_{d}} r_{ld} = R(t'), P_{k\ell} \geq 0, \ell \in O_k,
\end{align}
\hrule
\vspace{0.1in}
\vspace*{0.25cm}
\noindent where in \eqref{eq:optratecond:3} $r_{kl}$ is the rate achieved on each of the outgoing links $l \in O_k$, while 
\eqref{eq:optratecond:1} and \eqref{eq:optratecond:2} capture the flow conservation constraints at each node. 

Recall that a {\it max-flow} problem over a given directed graph with specified edge capacities is to find the largest rate of commodity that can be transported from a given node (source) to another (destination) that respects the edge capacities \cite{AlgorithmsBook}. 
Essentially, Problem \eqref{eq:opt:2} is a single source-destination max-flow problem  that maximizes the instantaneous rate (at time $t'$) 
from source to destination if the power used by node~$k$ is fixed to be $\sfP_{k}(t')$ for 
$k\in G$. The optimal power allocation by each node on its outgoing links subject to
a sum power constraint of $\sfP_{k}(t')$ is to be found.
%
When compared to Problem \eqref{eq:opt:1}, Problem \eqref{eq:opt:2} does not have a time based decision component, 
since transmit power allocation of node $k$ is fixed for the whole duration $[\tp, 2\tp]$, 
and is thus simpler to solve.

\begin{rem}For the rest of the paper, we proceed as follows. As noted before, the delay minimization problem problem involves finding optimal power transmission strategies for each node that are a function of time, which is challenging. 
To simplify the problem, we have defined an intermediate max-flow problem \eqref{eq:opt:2} that uses a fixed power transmission strategy, and we show that if we can solve the max-flow problem optimally, then we can derive online algorithm for the delay minimization problem with competitive ratio of $2+\delta$ as shown in Lemma \ref{lem:compratio}. 
The solution to the max-flow problem for a two-layer network is provided in Section \ref{sec:two:layer}, which is then extended for arbitrary number of layers in Section \ref{sec:mult:layer}.
\end{rem}
Suppose, for any $\tp$, we can solve  \eqref{eq:opt:2} to find  the optimal rate as $R^\star(\tp)$. 
Lemma~\ref{lem:connection} connects Problems \eqref{eq:opt:1} and \eqref{eq:opt:2}, in turn enabling
a \emph{lazy} online algorithm to solve \eqref{eq:opt:1}.

\begin{lemma}\label{lem:connection} For a given energy arrival sequence $\sigma$, let  
$T_{\text{off}}(\sigma)$ be the optimal time obtained by solving \eqref{eq:opt:1}. 
Then $T_{\text{off}}(\sigma)R^\star(T_{\text{off}}(\sigma)) \ge B_o$.
\end{lemma}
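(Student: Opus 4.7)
The statement says the optimal objective value $R^\star(t')$ of the fixed-power max-flow problem \eqref{eq:opt:2}, evaluated at $t'=T_{\text{off}}(\sigma)$, is at least $B_o/T_{\text{off}}(\sigma)$. The plan is to construct a feasible point of \eqref{eq:opt:2} at $t'=T_{\text{off}}(\sigma)$ directly from any offline optimal power schedule for \eqref{eq:opt:1}, whose objective value is exactly $B_o/t'$. This will immediately give $R^\star(t')\ge B_o/t'$, which rearranges to the claim.

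Let $\{P^{\text{off}}_{kl}(t)\}$ be an offline optimal schedule for \eqref{eq:opt:1} that delivers $B_o$ bits to $d$ by time $t' := T_{\text{off}}(\sigma)$, so in particular $\sum_{l\in O_s}B_{sl}(t')=\sum_{l\in I_d}B_{ld}(t')=B_o$. Define the candidate fixed rates on each edge of the DAG by time-averaging,
\begin{equation*}
r_{kl} \;:=\; \frac{B_{kl}(t')}{t'} \;=\; \frac{1}{t'}\int_0^{t'}\log\bigl(1+P^{\text{off}}_{kl}(t)\bigr)\,dt,
\end{equation*}
and set $P_{kl}:=2^{r_{kl}}-1$, so that the representation $r_{kl}=\log(1+P_{kl})$ required by \eqref{eq:optratecond:3} holds by construction. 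The main step is to verify that $(r_{kl},P_{kl})$ satisfies all the constraints of \eqref{eq:opt:2}; if so, its source out-flow is $\sum_{l\in O_s}r_{sl}=B_o/t'$, which lower-bounds $R^\star(t')$.

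Verifying \eqref{eq:optratecond:1}--\eqref{eq:optratecond:2} is immediate: time-averaging the offline flow-conservation inequality $\sum_{l\in I_k}B_{lk}(t')\ge\sum_{l\in O_k}B_{kl}(t')$ over $[0,t']$ gives $\sum_{l\in I_k}r_{lk}\ge\sum_{l\in O_k}r_{kl}$ for every non-terminal $k$, and the source/destination equalities follow from $\sum_{l\in O_s}B_{sl}(t')=\sum_{l\in I_d}B_{ld}(t')=B_o$. The main obstacle is the per-node sum-power constraint $\sum_{l\in O_k}P_{kl}\le\sfP_k(t')$, because averaging rates and then inverting through $\log(1+\cdot)$ could in principle inflate the power budget. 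To handle it, apply Jensen's inequality to the convex map $x\mapsto 2^x$ on the uniform measure over $[0,t']$:
\begin{equation*}
2^{r_{kl}}\;=\;2^{\,\tfrac{1}{t'}\int_0^{t'}\log(1+P^{\text{off}}_{kl}(t))\,dt}
\;\le\;\frac{1}{t'}\int_0^{t'}2^{\log(1+P^{\text{off}}_{kl}(t))}\,dt
\;=\;1+\frac{1}{t'}\int_0^{t'}P^{\text{off}}_{kl}(t)\,dt.
\end{equation*}

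Summing over $l\in O_k$ and using the offline energy-causality constraint \eqref{eq:optcond:3}, namely $\int_0^{t'}\sum_{l\in O_k}P^{\text{off}}_{kl}(t)\,dt=e_k(t')\le\sfA_k(t')$, yields
\begin{equation*}
\sum_{l\in O_k}P_{kl}\;=\;\sum_{l\in O_k}\bigl(2^{r_{kl}}-1\bigr)\;\le\;\frac{1}{t'}\int_0^{t'}\sum_{l\in O_k}P^{\text{off}}_{kl}(t)\,dt\;\le\;\frac{\sfA_k(t')}{t'}\;=\;\sfP_k(t'),
\end{equation*}
which is exactly the sum-power constraint in \eqref{eq:optratecond:3}. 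Thus $(r_{kl},P_{kl})$ is feasible for \eqref{eq:opt:2} at $t'=T_{\text{off}}(\sigma)$, so $R^\star(t')\ge\sum_{l\in O_s}r_{sl}=B_o/t'$, and multiplying through by $t'$ gives the stated inequality $T_{\text{off}}(\sigma)\,R^\star(T_{\text{off}}(\sigma))\ge B_o$.
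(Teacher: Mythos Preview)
Your proof is correct and shares the paper's core mechanism---Jensen's inequality applied edgewise over $[0,T_{\text{off}}(\sigma)]$ to pass from the offline time-varying schedule to a constant-power allocation feasible for \eqref{eq:opt:2}. The difference lies in which quantity is time-averaged. The paper averages the \emph{power} on each edge, setting $P_{k\ell}=\sfE_{k\ell}(T_{\text{off}})/T_{\text{off}}$, and then uses concavity of $\log$ to argue that the offline bits on that edge are at most $T_{\text{off}}\log(1+P_{k\ell})$. You instead average the \emph{rate}, setting $r_{k\ell}=B_{k\ell}(t')/t'$, and use convexity of $2^{x}$ to show the induced constant power $2^{r_{k\ell}}-1$ is at most the edge's average offline power. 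Your route has the advantage that the flow-conservation constraints \eqref{eq:optratecond:1}--\eqref{eq:optratecond:2} are inherited for free from \eqref{eq:optcond:1}--\eqref{eq:optcond:2}, leaving only the sum-power budget to check via Jensen; under the paper's power-averaging route the resulting rates $\log(1+\sfE_{k\ell}/T_{\text{off}})$ merely upper-bound the offline edge-rates, so flow conservation at intermediate nodes for the candidate point of \eqref{eq:opt:2} does not follow directly and the paper's argument glosses over this. Both approaches are really the same Jensen idea in dual form, but your construction is the cleaner of the two.
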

\begin{proof} 
Notice that since the optimal offline scheme only employed energy collected till $T_{\text{off}}(\sigma)$,
we can as well restrict node $k$ of the network to use only the energy $\sfE_k(T_{\text{off}}(\sigma))$ that was harvested till time $T_{\text{off}}(\sigma)$. 
Out of $\sfE_k(T_{\text{off}}(\sigma))$, if node $k$ spends an energy $\sfE_{k\ell}(T_{\text{off}}(\sigma))$ on its outgoing link $\ell \in O_k$, then since $\log$ is a concave function, the number of bits sent by node $k$ on its outgoing link $\ell \in O_k$ with the optimal offline algorithm is at most $$B_{ub}(k,\ell) = T_{\text{off}}(\sigma) \log \left(1+\frac{\sfE_{k\ell}(T_{\text{off}}(\sigma))}{T_{\text{off}}(\sigma)}\right)$$ such that $$\sum_{\ell \in O_k} \sfE_{k\ell}(T_{\text{off}}(\sigma)) \le \sfE_{k}(T_{\text{off}}(\sigma)).$$ Thus, the number of bits sent by node $k$ on all its outgoing links is at most 
$$B_{ub}(k) = \max_{E_{k\ell}, \ell \in O_k} \sum_{\ell \in O_k}T_{\text{off}}(\sigma) \log \left(1+\frac{\sfE_{k\ell}(T_{\text{off}})(\sigma)}{T_{\text{off}}(\sigma)}\right),$$
$ \ \forall \ \ell\in O_k$ subject to $\sum_{\ell \in O_k} E_{k\ell}(T_{\text{off}})(\sigma) \le \sfE_k(T_{\text{off}})(\sigma)$.
Thus, summing the bits coming into the destination $\sum_{k\in I_d}B_{ub}(k)$, we get that $B_o \le \sum_{k\in I_d}B_{ub}(k)$.

For node $k$, with reference to Problem \eqref{eq:opt:2}, defining the variable power partition as $P_{k\ell} = \frac{\sfE_{k\ell}(T_{\text{off}})(\sigma)}{T_{\text{off}}(\sigma)}, \ell \in O_k$ and the total power constraint $\sfP_k =\frac{\sfE_k(T_{\text{off}})(\sigma)}{T_{\text{off}}(\sigma)}$, we see that $$B_{ub}(k) = T_{\text{off}}(\sigma) \sum_{\ell \in O_k} r_{k\ell}^\star,$$ where $r_{k\ell}^\star$ is the optimal rate for Problem \eqref{eq:opt:2} with $t' = T_{\text{off}}$.
Since this is true for all nodes of $V$, summing over all nodes in the network, in particular the ones that have directed edges to the destination, $\sum_{\ell \in I_{d}} r^\star_{\ell d} = R^\star(T_{\text{off}}(\sigma))$ that contribute the flow towards the destination, we get 
$\sum_{k\in I_d}B_{ub}(k)= T_{\text{off}}(\sigma) \sum_{k \in I_d} r_{\ell d}^\star$. Thus, we get that 
$T_{\text{off}}(\sigma) R^\star(T_{\text{off}}(\sigma))$ has to be at least as much as $B_o$.
\end{proof}

Thus, if we knew $T_{\text{off}}(\sigma)$ and the solution of Problem \eqref{eq:opt:2} for $t'=T_{\text{off}}(\sigma)$, we could directly use Lemma \ref{lem:connection} to find a feasible solution for Problem \eqref{eq:opt:1}. However, since we do not know the optimal offline algorithm or $T_{\text{off}}(\sigma)$, 
we now define an algorithm (Algorithm~$\lzo$) for finding a suitable time $t'$ (estimate of $T_{\text{off}}(\sigma)$) such that solving for $R^\star(t')$ will result in a feasible solution for Problem \eqref{eq:opt:1}. Notice that $R^{\star}(t')$ depends on $P_k(t'), \forall k$. Let us extend this definition and denote $R^{\star}(t',\Delta t', s)$, as the solution to \eqref{eq:opt:2} 
where the energy available at node $k$ is only the energy accumulated till time $s$, i.e., $\sfA_k(s)$, 
and the algorithm transmits for time interval $[t', t'+\Delta t']$ with equal power $\frac{\sfA_k(s)}{\Delta t'}$ from each node, and the optimisation is over the power allocation  $P_{kl}$ on outgoing edges from $k$ to nodes $l \in O_k$.
Thus each node can only use the energy harvested till time $s$ in evaluating $R^{\star}(t',\Delta t', s)$.  The case of interest here is $s\leq t'$. 
 The definition of $R^{\star}(t',\Delta t', s)$ allows the implementation of a look ahead scheme to find a suitable upper bound to $T_{off}(\sigma)$ in Algorithm~$\lzo$.

\begin{algorithm}
\caption{~$\lzo$~\label{algo:lazy}}

\begin{algorithmic}

\vspace*{0.15cm}
\STATE On the first energy arrival (anywhere in the network) instant, $\min \tau_{kl}$, initialize the time counter $c =\min \tau_{kl}$. \\

\IF{ $\bigl( 2c R^\star(c,2c,c) \ge B_o\bigr)$} 
\STATE Find $t_{min} = \min \{ t' : t' R^{\star}(t') \geq B_o, c \leq t' \leq 2c  \}$.

\STATE Obtain the static power allocation 
$P^\star_{k\ell}, \ell \in O_k, k \in V$ achieving $R^{\star}(t_{min})$ in \eqref{eq:opt:2}. 

\STATE Employ this static power allocation for time $[t_{min}, 2t_{min}]$ and
 send $B_o$ bits. \\

\ELSE
\STATE Update time counter to $c= \min \bigl( 2c, \min\{\tau_{kl}: c < \tau_{kl} \}\bigr)$ (double the counter or go to the next energy arrival instant), 
\STATE
Wait till time $c$, then go to Step I \\

\ENDIF
 
\end{algorithmic}

\end{algorithm}

Notice that the algorithm first checks at time $c$, 
whether the energy that has arrived in interval $[0,c]$ is sufficient to send $B_o$ 
bits within the interval $[c,3c]$ or not, i.e., whether $2c R^{\star}(c, 2c,c) \ge B_o$ or $2c R^{\star}(c, 2c,c) < B_o$. If $2c R^{\star}(c, 2c,c) < B_o$, then it is not possible to transmit $B_o$ bits using energy that has arrived till time $c$ in interval $[c,3c]$ of width $2c$. This also means that if no new energy arrives in interval $[c,2c]$, then it is not possible for any online algorithm to transmit $B_o$ bits in interval $[0,2c]$. Thus, if the next energy arrival $\tau_{kl} > 2c$, 
then $c$ is updated to $2c$.
In case new energy arrives before time $2c$, $c$ is updated to next energy arrival instance. Since the algorithm is online, this means that after checking at $c$, we wait till time $2c$ or the next arrival instance whichever happens earlier and check again whether $2c R^{\star}(c, 2c,c) < B_o$ or not.

On the other hand, if 
$2c R^{\star}(c, 2c,c) > B_o$, then we can employ a line search to
find the parameter $t_{min} \in [c,2c]$ such that  $t_{min} = \min \{ t' : t' R^{\star}(t') \geq B_o, c \leq t' \leq 2c  \}$, and the line search complexity is  
 $\sfO(\log 1/ \delta)$ steps for an accuracy of $\delta \in (0,1)$. 
Note that we are assuming with Algorithm~$\lzo$, that each node using 
the full-duplex mode is able to forward the bits it is receiving on its outgoing links without any delay. We can even account for the forwarding (decoding/encoding) delay within the overall completion time, since the forwarding delay is typically small compared to the bits transmission time.

\begin{lemma}\label{lem:compratio}The competitive ratio $\mu_{\mathsf{Lazy}}(\sigma)$ of online algorithm $\lzo$ is at most $2+\frac{2\delta}{T_{\text{off}}(\sigma)}$, where $\delta$ is the step-size.
\end{lemma}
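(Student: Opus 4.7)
The plan is to bound the completion time $T_{\mathsf{Lazy}}(\sigma)=2t_{\min}$ in terms of $T_{\text{off}}(\sigma)$, by controlling both the value of the counter $c$ at which the outer loop of $\lzo$ first succeeds, and the line-search output $t_{\min}$.

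First, I would establish two monotonicity facts, both stemming from the concavity of the per-edge rate $\log(1+P)$. The scalar identity that $T\log(1+E/T)$ is non-decreasing in $T>0$ for every fixed $E\ge 0$, combined with $\sfA_k(\cdot)$ being non-decreasing in time, propagates edgewise through the max-flow~\eqref{eq:opt:2} to give: (i) $c\mapsto 2cR^\star(c,2c,c)$ is non-decreasing in $c$; (ii) $t'\mapsto t'R^\star(t')$ is non-decreasing in $t'$. The same concavity argument yields the pointwise comparison $2cR^\star(c,2c,c)\ge cR^\star(c)$ for every $c>0$, obtained by comparing the duration-$2c$ and duration-$c$ transmissions under the common energy budget $\sfA_k(c)$.

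Second, I would apply this pointwise comparison to Lemma~\ref{lem:connection}, which asserts $T_{\text{off}}(\sigma)R^\star(T_{\text{off}}(\sigma))\ge B_o$, to conclude $2T_{\text{off}}(\sigma)R^\star(T_{\text{off}}(\sigma),2T_{\text{off}}(\sigma),T_{\text{off}}(\sigma))\ge B_o$. Thus the outer-loop test is satisfied at $c=T_{\text{off}}(\sigma)$, and by~(i) also at every $c\ge T_{\text{off}}(\sigma)$. Failure of the test at $c_{k-1}$ therefore forces $c_{k-1}<T_{\text{off}}(\sigma)$, and the update rule gives $c_k\le 2c_{k-1}<2T_{\text{off}}(\sigma)$.

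Third, I would analyze the inner line search. Because $R^\star(c,2c,c)\le R^\star(2c,2c,2c)$ (more energy available at time $2c$), whenever the outer test succeeds at $c=c_k$ the point $t'=2c_k$ satisfies $t'R^\star(t')\ge B_o$, ensuring that $t_{\min}$ is well-defined in $[c_k,2c_k]$. By (ii) and Lemma~\ref{lem:connection}, $t'=T_{\text{off}}(\sigma)$ is likewise feasible for the line-search constraint. Given the inclusion $T_{\text{off}}(\sigma)\in[c_k,2c_k]$, the binary line search at resolution $\delta$ returns $t_{\min}\le T_{\text{off}}(\sigma)+\delta$; hence $T_{\mathsf{Lazy}}(\sigma)=2t_{\min}\le 2T_{\text{off}}(\sigma)+2\delta$, and dividing by $T_{\text{off}}(\sigma)$ yields the claimed $\mu_{\mathsf{Lazy}}(\sigma)\le 2+\frac{2\delta}{T_{\text{off}}(\sigma)}$.

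The main obstacle lies in verifying the inclusion $T_{\text{off}}(\sigma)\in[c_k,2c_k]$. The upper half, $T_{\text{off}}(\sigma)\le 2c_k$, can be obtained by introducing the threshold $c^\star:=\min\{c:2cR^\star(c,2c,c)\ge B_o\}$: one has $c_k\ge c^\star$, and $2c^\star$ dominates $t'_0:=\min\{t':t'R^\star(t')\ge B_o\}$, while $t'_0\le T_{\text{off}}(\sigma)$ by Lemma~\ref{lem:connection}. The lower half, $T_{\text{off}}(\sigma)\ge c_k$, is the more delicate step because the outer-loop doubling can a priori overshoot $T_{\text{off}}(\sigma)$; the expected workaround is to exploit the tight bracket $c_{k-1}<c^\star\le c_k<2c^\star$ combined with the pointwise comparison above, so that any residual slack between $c_k$ and $T_{\text{off}}(\sigma)$ is absorbed into the additive $\delta$ coming from the binary line search.
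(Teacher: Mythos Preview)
Your monotonicity setup (facts (i), (ii) and the pointwise comparison $2cR^\star(c,2c,c)\ge cR^\star(c)$) is correct and considerably more careful than the paper's own argument, which simply asserts in one line that the break-time value satisfies $c^\star\le T_{\text{off}}(\sigma)+\delta$ ``from Lemma~\ref{lem:connection}'' and then divides. Your overall plan---bound $t_{\min}$ by $T_{\text{off}}(\sigma)+\delta$ and conclude $T_{\mathsf{Lazy}}=2t_{\min}\le 2T_{\text{off}}(\sigma)+2\delta$---is exactly the paper's route.

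There is, however, a genuine gap precisely where you flag it. From $c_{k-1}<T_{\text{off}}(\sigma)$ you only extract $c_k<2T_{\text{off}}(\sigma)$, and your proposed workaround of ``absorbing the residual slack between $c_k$ and $T_{\text{off}}(\sigma)$ into $\delta$'' cannot work: if the doubling genuinely overshot, that slack would be $\Theta(T_{\text{off}}(\sigma))$, not $O(\delta)$, and the bound would degrade to a ratio near $4$. (Your chain for the ``upper half'' is also broken: $c_k\ge c^\star$ and $2c^\star\ge t_0'$ together with $t_0'\le T_{\text{off}}(\sigma)$ do not yield $2c_k\ge T_{\text{off}}(\sigma)$.) Two observations close the argument cleanly. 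First, the upper half is unnecessary: once $c_k\le T_{\text{off}}(\sigma)$ is known, monotonicity (ii) and Lemma~\ref{lem:connection} give $t_0'\le T_{\text{off}}(\sigma)$, and the line search over $[c_k,2c_k]$ returns $t_{\min}\le\max(c_k,t_0')+\delta\le T_{\text{off}}(\sigma)+\delta$ regardless of whether $T_{\text{off}}(\sigma)\le 2c_k$. Second, the lower half $c_k\le T_{\text{off}}(\sigma)$ \emph{does} hold, and the key ingredient you did not use is the ``next arrival'' clause in the update $c\leftarrow\min\bigl(2c,\min\{\tau_{kl}:\tau_{kl}>c\}\bigr)$. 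If some arrival lies in $(c_{k-1},T_{\text{off}}(\sigma)]$, the min caps $c_k$ at or below that arrival time, so $c_k\le T_{\text{off}}(\sigma)$. If no arrival lies there, then $\sfA_j(c_{k-1})=\sfA_j(T_{\text{off}}(\sigma))$ for every node~$j$; writing $g(\Delta,\{E_j\}):=\Delta\,R^\star(\{E_j/\Delta\})$, your duration-monotonicity plus Lemma~\ref{lem:connection} give $g(2c_{k-1},\{\sfA_j(c_{k-1})\})=g(2c_{k-1},\{\sfA_j(T_{\text{off}})\})$, which is $\ge B_o$ whenever $2c_{k-1}\ge T_{\text{off}}(\sigma)$. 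Since the test failed at $c_{k-1}$, this forces $2c_{k-1}<T_{\text{off}}(\sigma)$, hence $c_k\le 2c_{k-1}<T_{\text{off}}(\sigma)$.
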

\begin{proof}
Let the counter be at $c=c^\star$, when the algorithm $\lzo$ breaks.
From Lemma \ref{lem:connection}, $c^\star \le T_{\text{off}}(\sigma) + \delta$.  Thus, using the static power allocation $P^\star_{k\ell}, \ell \in O_k$ found by the solution of \eqref{eq:opt:2} for $R^\star(c^\star)$, $B_o$ bits can be sent to the destination from time $[c^\star, 2c^\star  ]$. Thus the time taken to finish transmission of $B_o$ bits by $\lzo$ is $T_{\text{Lazy}}(\sigma)= 2c^\star$.
Therefore, the competitive ratio of $\mathsf{Lazy}$ is at most 
\begin{align*}\mu_{\mathsf{Lazy}}(\sigma) &= \frac{2c^\star }{T_{\text{off}}(\sigma)} \le \frac{2T_{\text{off}}(\sigma) + 2\delta }{T_{\text{off}}(\sigma)} \le 2+ \frac{2\delta}{T_{\text{off}}(\sigma)}.\end{align*}\end{proof}

\begin{theorem}\label{thm:compratio} Algorithm $\lzo$ is an optimal online algorithm. \end{theorem}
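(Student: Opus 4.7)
The plan is to deduce the theorem by combining the upper bound on the competitive ratio of $\lzo$ already established in Lemma \ref{lem:compratio} with the known lower bound of $2$ on the competitive ratio of any online algorithm, even for the two-node special case.

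First I would invoke Lemma \ref{lem:compratio} to conclude that, for any adversarial energy arrival sequence $\sigma$ and any $\delta>0$,
\begin{equation*}
\mu_{\mathsf{Lazy}}(\sigma) \;\le\; 2 + \frac{2\delta}{T_{\text{off}}(\sigma)}.
\end{equation*}
Since the step-size $\delta$ is a free parameter of the algorithm, for any desired accuracy $\epsilon>0$ I can pre-select $\delta$ small enough (noting that $T_{\text{off}}(\sigma)$ is bounded below by the first energy arrival instant, which is strictly positive whenever the problem is non-trivial) so that $\mu_{\mathsf{Lazy}} \le 2+\epsilon$.

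Next I would appeal to the lower bound of \cite{off_on}, which shows that no online algorithm can achieve competitive ratio strictly smaller than $2$ even in a simple two-node EH network. Since a two-node link is a trivial special case of a DAG (with $V=\{s,d\}$ and a single edge), this lower bound transfers directly to our more general setting: for any online algorithm $\cal A$, $\mu_{\cal A} \ge 2$, and hence $\mu^\star \ge 2$.

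Putting the two bounds together yields $\mu^\star = 2$ (in the limit $\delta \to 0$), and $\lzo$ meets this bound to within an arbitrarily small additive gap while incurring only an $\sfO(\log(1/\delta))$ penalty in the per-iteration line search. This establishes that $\lzo$ is an optimal online algorithm in the sense that its competitive ratio matches the information-theoretic lower bound up to $\delta$. I do not anticipate any real obstacle here, as the theorem is essentially a corollary that stitches together the already-proven upper bound with the previously known lower bound; the only subtlety to flag is explicitly identifying $T_{\text{off}}(\sigma)>0$ so that $\delta$ can be chosen uniformly in the class of admissible inputs.
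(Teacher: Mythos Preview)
Your proposal is correct and follows essentially the same approach as the paper: invoke Lemma~\ref{lem:compratio} for the upper bound $2+\frac{2\delta}{T_{\text{off}}(\sigma)}$, invoke the two-node lower bound of~\cite{off_on} (which transfers to the general DAG since the two-node link is a special case), and take $\delta$ small. The only addition you make beyond the paper is the explicit remark that $T_{\text{off}}(\sigma)>0$ is needed to choose $\delta$ appropriately, which is a reasonable point to flag.
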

\begin{proof}From \cite{off_on}, it follows that for $G$ where there are only two nodes, the competitive ratio for solving \eqref{eq:opt:1} is lower bounded by $2$. From Lemma \ref{lem:compratio}, choosing the scanning width $\delta$ small enough (controlled by the complexity budget) in algorithm $\lzo$, we can make its competitive ratio arbitrarily close to $2$, completing the proof.
\end{proof}

Note that the choice of $\delta$ will depend on the complexity budget, since the complexity of algorithm $\lzo$ is $O(\log (1/\delta))$.

For the rest of the paper, we concentrate on solving \eqref{eq:opt:2} which is a max-flow problem with logarithmic output utility/flow under sum-power constraint at each node. Since Problem \eqref{eq:opt:2} is of interest on its own, we present a self contained presentation of its solution.

\section{Max-Flow Problem} \label{sec:max:flow}
For the  ease of exposition, rather than working with a DAG, we instead consider an equivalent
layered network, where there are $K$ intermediate layers 
between the source and the destination. 
The set of nodes in the intermediate layer $k\in\{1, \dots, K\}$ is denoted as $\cL_k$. 
Source $s$ is at layer~$0$, and the destination is at layer $K+1$.
Notice that in a layered network edges exist only between nodes of adjacent layers, 
i.e. between nodes of $\cL_k$ and $\cL_{k+1}$.
In Fig. \ref{fig:mapping}, we illustrate via a simple example how an equivalent layered network (on the right) can be constructed from a DAG (on the left), where $a \equiv a^{\prime}$,
$b \equiv b^{\prime}$, $d \equiv d^{\prime}$, and $c^{\prime\prime} \equiv c$.
The extra nodes $b^{\prime\prime}$ and $c^{\prime}$ have infinite power (no power constraint). 
Essentially, the idea is that each if a pair of nodes $u,v$ have an edge between them that either belong to the same layer or are in non-adjacent layers, add extra nodes (dummy nodes) 
corresponding to $u,v$ with infinite power to make the network a layered network. For brevity we omit the precise construction, which is immediately clear from the above description.
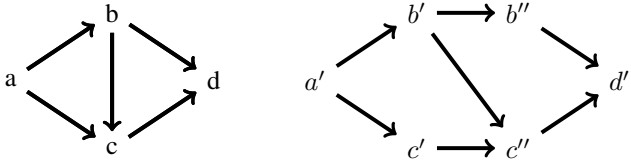
\begin{figure}
\begin{centering}
\begin{tikzpicture}[node distance=2cm,scale=.9, line width=1.5pt]
\label{fig:trans}
\node (A) at (0, 0) {a};
\node (B) at (1.5, 1) {b};
\node (C) at (1.5, -1) {c};
\node (D) at (3, 0) {d};

\draw[->]
  (A) edge (B) (B) edge (C) (C) edge (D) (A) edge (C) (B) edge (D)  ;

\pgftransformxshift{1.0cm}

\node (A') at (3.5, 0) {$a^{\prime}$};
\node (B') at (5, 1) {$b^{\prime}$};
\node (B'') at (6.5, 1) {$b^{\prime\prime}$};
\node (C') at (5, -1) {$c^{\prime}$};
\node (C'') at (6.5, -1) {$c^{\prime\prime}$};
\node (D') at (8, 0) {$d^{\prime}$};

\draw[->]
  (A) edge (B) (B) edge (C) (C) edge (D) (A) edge (C) (B) edge (D)  ;

\draw[->]
  (A') edge (B') (C'') edge (D') (A') edge (C') (B'') edge (D')
  (B') edge (B'') (B') edge (C'') (C') edge (C'');

\end{tikzpicture}
\end{centering}

\caption{Mapping a DAG to a layered network.~\label{fig:mapping}}
\end{figure}

Problem \eqref{eq:opt:2} is essentially a max-flow problem, where, the rate achievable on any subset of outgoing links from any node are constrained, unlike the classical problem where each edge has individual rate constraint/capacity. Even with  
these additional constraints, 
if the outgoing rate constraints are polymatroidal, i.e. defined by intersection of hyperplanes, 
one could use the result from \cite{Lawler} to find the solution. However, the rate constraints of the type 
considered in Problem \eqref{eq:opt:2} are not polymatroidal ones. 
For example, if the out-degree of a node is $2$ with total power $P$, then the rate constraints \eqref{eq:opt:2} will result in a 
region $(r_1,r_2) = (\log(1+\alpha P), \log(1+(1-\alpha) P))$ as shown in Fig. \ref{fig:rateregion} that is non-polymatroidal, whose boundary is traced by $0\leq \alpha \leq 1$.
Thus, Problem~\eqref{eq:opt:2} is in fact a novel problem, which is of independent interest in 
the max-flow literature. Moreover, the flow-conservation constraints \eqref{eq:optratecond:1} and \eqref{eq:optratecond:2} are equal to a difference of $\log$ terms which in general need not result in a convex constraint set. In Lemma \ref{lem:concave}, we however, show that Problem~\eqref{eq:opt:2} is a concave problem (where by concave, we mean that the objective funciton is concave with convex constraint set) by exploiting the special structure of the problem.

\begin{figure}
\begin{centering}
\begin{tikzpicture}[line width=1.5pt,scale=1]

\draw (2,0) arc (0:90:2) ;
\draw[thin, ->] (0,0)--(3.25,0) node[right]{$r_1$}; 
\draw[thin, ->] (0,0)--(0,3.0) node[above]{$r_2$};
\node (A) at (2, -0.25) {$\log(1+P)$};
\node (B) at (-0.8, 2) {$\log(1+P)$};
\draw[fill] (1.55,1.25) circle(0.05cm) node (E){};
\node at (E)[right]  {$\bigl( \log(1+\alpha P), \log(1+(1-\alpha) P)\bigr)$};

\end{tikzpicture}
\caption{Rate region for out-degree $2$ with total power $P$.~\label{fig:rateregion}}
\end{centering}
\end{figure}
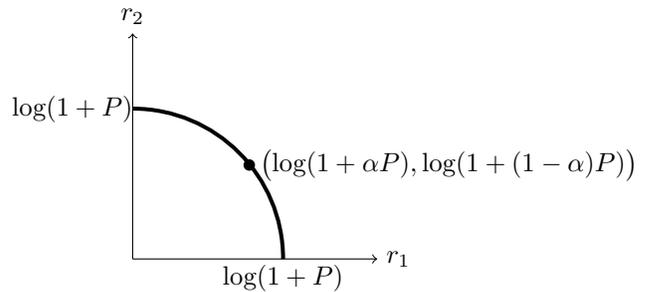

\begin{lemma}\label{lem:concave} Problem \eqref{eq:opt:2} is concave in the underlying variables 
$P_{kl}, l \in O_k, k \in V$.
\end{lemma}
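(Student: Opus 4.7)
The plan is to reformulate Problem~\eqref{eq:opt:2} via the strictly increasing bijection $r_{kl}=\log(1+P_{kl})$, with inverse $P_{kl}=2^{r_{kl}}-1$, which maps $P_{kl}\in[0,\infty)$ to $r_{kl}\in[0,\infty)$ edge by edge. First I would rewrite each ingredient of \eqref{eq:opt:2} in the new coordinates. The objective $R(t')=\sum_{l\in O_s}r_{sl}$ and the flow-conservation inequalities \eqref{eq:optratecond:1}--\eqref{eq:optratecond:2} are already linear in the $r_{kl}$; the non-negativity $P_{kl}\ge 0$ becomes $r_{kl}\ge 0$; and the only nonlinear constraint, the sum-power constraint in \eqref{eq:optratecond:3}, rewrites as
\[
\sum_{l\in O_k}\bigl(2^{r_{kl}}-1\bigr)\le \sfP_k(t'),\qquad k\in V,
\]
which carves out a convex subset of $r$-space because $2^{r_{kl}}$ is convex in $r_{kl}$ and a finite sum of convex functions is convex. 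Hence the transformed problem is the maximization of a linear (and therefore concave) objective over a convex feasible set, i.e., a concave optimization problem. Since the map $P_{kl}\mapsto r_{kl}$ is a bijection preserving both the feasible region and the value of the objective, Problem~\eqref{eq:opt:2} is equivalent to this concave program, which is the sense in which the problem is concave in the underlying variables $P_{kl}$.

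The point requiring care—and the only real obstacle—is that one cannot simply verify concavity by inspection in the $P_{kl}$ coordinates. Substituting $r_{kl}=\log(1+P_{kl})$ into the flow-conservation constraint yields $\sum_{l\in I_k}\log(1+P_{lk})\ge \sum_{l\in O_k}\log(1+P_{kl})$, a difference of two concave functions; for instance, at a node with a single incoming edge and two outgoing edges it reduces to $P_{uk}\ge P_{kv_1}+P_{kv_2}+P_{kv_1}P_{kv_2}$, which contains a bilinear term and does not define a convex region in $P$-space. The change of variables is precisely what linearizes this constraint, pushing all of the nonlinearity into the sum-power constraint, where the convexity of $2^{r_{kl}}$ takes over. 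I would therefore emphasize in the proof that the special algebraic structure of the Shannon rate $\log(1+P)$—in particular that its inverse $2^r-1$ is convex—is what allows Problem~\eqref{eq:opt:2}, despite appearances, to be solved as a concave maximization and hence via standard convex-optimization machinery in the sequel.
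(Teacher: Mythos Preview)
Your argument is correct and takes a genuinely different route from the paper's. You perform an explicit change of variables $P_{kl}\mapsto r_{kl}=\log(1+P_{kl})$, after which the objective and the flow-conservation constraints become linear and the only nonlinear constraint, $\sum_{l\in O_k}(2^{r_{kl}}-1)\le \sfP_k(t')$, is convex because the exponential is convex; this exhibits a single concave program in $r$-space. The paper instead stays in $P$-space and argues operationally: given two power allocations $\bar P,\bar Q$ respecting the sum-power budgets, concavity of $\log$ forces every link capacity under $\lambda\bar P+(1-\lambda)\bar Q$ to dominate the convex combination of the individual link capacities, and since the end-to-end max-flow is monotone in the edge capacities while the convex combination of two feasible flows is again a feasible flow, the value function $F(\bar P)$ is concave over the polytope of sum-power constraints. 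In effect, the paper treats the routing variables as an inner maximization so that the outer feasible set is just the sum-power polytope, whereas you eliminate $P$ and expose the convex structure directly in the rates. Your route is cleaner for implementation (a solver can be pointed at it as stated); the paper's time-sharing argument is more conceptual and applies verbatim to any concave rate function $r(P)$ without invoking an inverse---though your approach also generalizes, since the inverse of a concave increasing function is convex, so $\sum_{l}P(r_{kl})\le \sfP_k$ remains convex for any such rate law.
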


\begin{proof}
Consider two feasible sets of power 
allocation schemes, say $\bar P=\{P_{kl}, k \in V, l \in O_k\}$ and 
$\bar Q=\{Q_{kl}, k \in V, l \in O_k\}$, both respecting the power constraints. 
The former allocates $P_{kl}$ for the edge $(k,l)$ whereas the latter assigns $Q_{kl}$.
For $0\leq \lambda \leq 1$, we have $\lambda \log(1+P_{kl}) + (1-\lambda)\log(1+Q_{kl})$
\begin{multline} \label{eq:obj:concave} \leq 
\log(1 + \lambda  P_{kl} + (1-\lambda)  Q_{kl}). 
\end{multline}
In other words, any linear combination of the rates achieved by the allocations 
$P_{kl}$ and $Q_{kl}$ for the respective fractions of time $\lambda$  and $1-\lambda$
on a link can also be achieved by using a constant power $\lambda P_{kl} + (1-\lambda)Q_{kl}$
for the whole duration. Since the available link rate
got augmented by \eqref{eq:obj:concave}, we know that the solution to max-flow is at 
least as much as the linear combination of the end-to-end flow achieved by $\bar P$ and 
$\bar Q$. This shows the required concavity.
\end{proof}

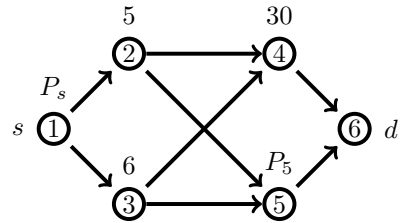
\begin{figure}[h]
\begin{center}
\begin{tikzpicture}[line width=1.5pt, every line/.style={->}, scale=1,
lab/.style={yshift=0.25cm}]
\coordinate (c1) at (0,0); \coordinate (c2) at (1,1); \coordinate (c3) at (1,-1);
\coordinate (c4) at (3,1); \coordinate (c5) at (3,-1); \coordinate (c6) at (4,0);
\foreach \i in {1,...,6}{
\draw[fill=white] (c\i) circle(0.2cm) node (n\i) {$\i$};
};
\draw (n1) edge[->] (n2) (n1) edge[->] (n3) (n2) edge[->] (n4) (n2) edge[->] (n5) 
	(n3)edge[->](n4) (n3)edge[->](n5) (n4) edge[->] (n6) (n5)edge[->](n6);

\draw (n1) node[above,lab]{$P_s$} (n2) node[above,lab]{$5$} (n3) node[above,lab]{$6$}
 	(n4) node[above,lab]{$30$} (n5) node[above,lab]{$P_5$} 
	(n6) node[right, xshift=0.25cm]{$d$} (n1) node[left, xshift=-0.25cm]{$s$};
\end{tikzpicture}
\end{center}
\caption{Fully connected layered network \label{fig:eg:nw}}
\end{figure}

To illustrate the behaviour of the max-flow problem with non-polymatroidal constraints, 
we now consider two examples. The first network, shown in Fig.~\ref{fig:eg:nw},
is a layered network with $2$ layers. The available power
is marked above each node. In order to demonstrate the effect of link capacities on the
max-flow, we will plot the $s\rightarrow d$ max-flow (throughput) as a function of the 
link power (equivalently link rate) $P_{5}$ from node~$5$ to the destination, for 
two different power constraints at the source.

In Fig.  \ref{fig:sim1}, the upper curve (colored blue) shown is for a source power $P_s=20$, while the lower curve (colored red) is for $P_s=15$. When the power 
$P_{5}$ at node~$5$ is low, the source is a power surplus node, i.e., $P_s=15$ or $P_s=20$ gives the same max-flow.  However, as $P_{5}$ increases, the source 
can utilize all its power to increase the max-flow. The solutions for this example
were obtained using
standard numerical solvers from convex programming.

\begin{figure}\label{fig:sim1}
\begin{center}
\begin{tikzpicture}
\begin{axis}[xlabel={$P_{5}$ Watts}, ylabel={Maxflow (b/s/Hz)}, legend style={at={(0.9,0.25)}}, scale=.8 ]
\plot coordinates {
(9.5,6.78463) (8.79685,6.78128) (8.14078,6.77167) (7.52865,6.75636) (6.95751,6.73587) (6.42462,6.71065) (5.92742,6.68109) (5.46351,6.64753) (5.03067,6.61029) (4.62681,6.56965) (4.25,6.52584) (3.89842,6.4791) (3.57039,6.42963) (3.26433,6.37761) (2.97876,6.32321) (2.71231,6.26659) (2.46371,6.20788) (2.23175,6.1472) (2.01533,6.08469) (1.81341,6.02044) (1.625,5.95456) (1.44921,5.88714) (1.2852,5.81826) (1.13216,5.74801) (0.989378,5.67645) (0.856155,5.60367) (0.731854,5.52971) (0.615877,5.45465) (0.507667,5.37853) (0.406703,5.30141) (0.3125,5.22335) (0.224606,5.14438) (0.142598,5.0503)
};
\plot coordinates{
(9.5,6.17493) (8.79685,6.17493) (8.14078,6.17493) (7.52865,6.17493) (6.95751,6.17493) (6.42462,6.17493) (5.92742,6.17493) (5.46351,6.17493) (5.03067,6.17493) (4.62681,6.17493) (4.25,6.17493) (3.89842,6.17493) (3.57039,6.17493) (3.26433,6.17493) (2.97876,6.17493) (2.71231,6.17493) (2.46371,6.1733) (2.23175,6.14679) (2.01533,6.08469) (1.81341,6.02044) (1.625,5.95456) (1.44921,5.88714) (1.2852,5.81826) (1.13216,5.74801) (0.989378,5.67645) (0.856155,5.60367) (0.731854,5.52971) (0.615877,5.45465) (0.507667,5.37853) (0.406703,5.30141) (0.3125,5.22335) (0.224606,5.14438) (0.142598,5.0503)
};
\legend{$P_s=20$,$P_s=15$}
\end{axis}
\end{tikzpicture}
\caption{Max-flow as a function of link power $P_{5}$ \label{fig:eg:rate}}
\end{center}
\end{figure}
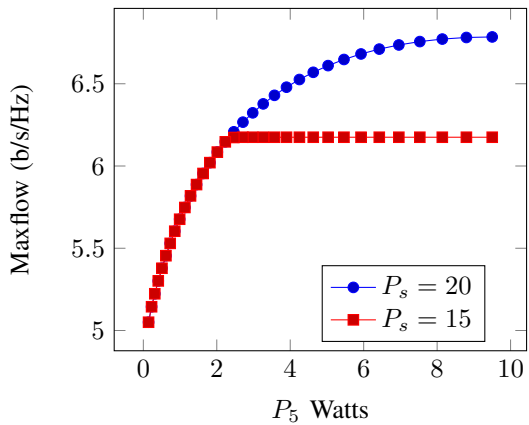
While having concavity is desired, standard numerical solutions will face the curse of dimensionality
when there are many nodes.  Hence our next goal is to identify and exploit sub-structures of 
the problem, where iteratively solving `smaller' problems can lead to global optimal solution
 similar to the classical or  the polymatroidal max-flow problems.

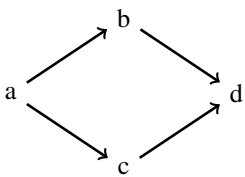
\begin{figure}[H]
\begin{centering}
\begin{tikzpicture}[node distance=2cm,scale=1, line width=1.0pt]
\node (A) at (3, 0) {a};
\node (B) at (4.5, 1) {b};
\node (C) at (4.5, -1) {c};
\node (D) at (6, 0) {d};

\draw[->]
  (A) edge (B) (C) edge (D) (A) edge (C) (B) edge (D)  ;

\end{tikzpicture}
\caption{Example where max-flow is not equal to min-cut.~\label{fig:mincut}}
\end{centering}
\begin{rem}Problem \eqref{eq:opt:2} is also special in comparison to the classical or the polymatroidal max-flow in the sense that the max-flow is not equal to the min-cut. To see this, consider Fig. \ref{fig:mincut}, where the nodes $a,b$ and $c$ have respective powers $P_a, P_b$ and $P_c$, 
with $P_b < < P_a < < P_c$. Consider a cut by the set of links ${(a,b)-(c,d)}$ that separates the nodes $\{a,b\}$ from $\{c,d\}$ (which is also the min-cut), whose cut-capacity is $\log(1+P_a) + \log(1+P_b)$. However, the achievable max-flow is only $\log(1+\alpha P_a) + \log(1+P_b)$, where $\alpha$ is such that $(1-\alpha)P_a = P_b$ since $(1-\alpha)P_a$ amount of power is routed from $a$ to $b$ to completely utilize the capability of node $b$. In case of polymatroidal constraints, in this example, the min-cut will be $(a) - (b,c)$, and the min-cut capacity will be equal to the max-flow \cite{Lawler}.
\end{rem}

\end{figure}

Next, to highlight the basic idea on how to solve Problem \eqref{eq:opt:2}, we consider a $2$-layer network, and propose an algorithm which performs iterative rate optimization only between layers. The optimality of the algorithm will also be shown.
 Extension to more than $2$-layers is described in 
Section~\ref{sec:mult:layer}.

\section{Optimal Algorithm for $K=2$-Layer Network} \label{sec:two:layer}

Recall that in a layered network, the 
source connects to all nodes in the first intermediate layer $\mathcal L_1$. 
Since $K=2$, each path from source to destination
has $3$ hops, this is illustrated for an example network in Fig.~\ref{fig:eg:nw}. All nodes in the set 
$\mathcal L_2$ connect to the destination. The middle section 
comprises an arbitrary subgraph of edges between $\mathcal L_1$ and  $\mathcal L_2$. 
Since the out-degree of all nodes in layer $2$ towards the destination is $1$, the power allocation for these nodes towards the destination is trivial, and we define their achievable rates as  $m_{jd} = \log(1+P_{jd})$ for $j \in \cL_2$.
We propose the following algorithm to find the optimal flow (refer Problem~\eqref{eq:opt:2}) from the source to the destination for $K=2$.

\begin{figure}[H]

\hrule

\vspace*{0.15cm}
\noindent ALGORITHM~FlowMax

\vspace*{0.15cm}
\hrule

\vspace*{0.25cm}
\noindent \textbf{Step I:} Initially set $P_{si} = \frac{P_s}{|\mathcal L_1|}$ $\forall i\in \cL_1$, and $m_{jd} = \log(1+P_{jd}),\forall j \in \cL_2$. \\ Initialize, counter $\sfc=1$, $R(0) = 0$.

\noindent \textbf{Step II:} Assign $f_i = \log (1 + P_{si})$ for $i\in \cL_1$.

\noindent \textbf{Step III:} For nodes $i \in \mathcal L_{1}$, find the optimal outgoing rates $r_{i}=\sum_{l \in O_i} r_{il}$ by solving
$$
R(\sfc) = \max \sum_{i\in \mathcal L_1} r_{i} \text{ such that } 
$$
 $ r_{i} \leq f_i \ \text{and} \  \sum_{l \in I_j}r_{lj}  \le m_{jd}, j \in \cL_2$.\\

  

\noindent \textbf{Step IV:} Define $U =\{i \in \cL_1: r_{i} < f_i\}$ \\
{\bf If $\Bigl( |U|=0$ or $|U|=|\cL_1|$ or $|R(\sfc) - R(\sfc-1)| \le \epsilon\Bigr)$} \textbf{break;}

\noindent{\textbf{Else}} \\
Compute the effective unused source power as
$$
\Delta =   \left( \sum_{i \in U} (P_{si} - e^{r_{i}} + 1) \right).
$$
Redistribute the unused power as
\begin{itemize}
\item for each $j \in U^c$, \ \ 
$
P_{sj} = P_{sj} +  \frac{\Delta}{|\mathcal L_1|}.
$ 
\item for each $i \in U\phantom{^c}$, \ \ 
$
P_{si} = e^{r_{i}} - 1 + \frac{\Delta}{|\mathcal L_1|}.
$ 
\end{itemize}
$\sfc=\sfc+1$, Go back to Step~II \\
{\bf EndIf}

\vspace*{0.1cm}
\hrule
\end{figure}

The main idea of the algorithm is to initially assign equal power from the source to all its outgoing edges in \textbf{Step II}. 
With equal power allocation, let $f_i,i\in \cL_1$ be the incoming rate into node $i \in \cL_1$ from the source. Subject to incoming constraints $f_i$ for nodes $i \in \cL_1$ and out-going rate constraints of $m_{jd}$ for nodes $j\in\cL_2$ to the destination, in \textbf{Step III}, we find the optimal sum-rate between nodes of layer $1$ and $2$, where the optimal out-going rate for nodes $i \in \cL_1$ is denoted by $r_{i}$.

The collection of nodes $i \in \cL_1$ for which the out-rate $r_{i}$ computed in \textbf{Step III} is lower than the incoming rate $f_i$ they are receiving from the source is called $U$. Nodes in $U$ are unable to support the rate they are getting in from the source. In the next iteration, power from the source is reduced towards nodes of $U$ and increased towards $U^c$ to update $f_i$, i.e., $f_i$ is increased for nodes $i \in U^c$ and decreased for $i\in U$. One important point is that even after updation of $f_i$'s we do not make $f_i = r_{i}$ but instead keep $f_i > r_{i}$. This might slow the algorithm's speed, however, avoids technical difficulty in proving its optimality. We show in Lemmas \ref{lem:U} and \ref{lem:nonvoidU} 
that if in any iteration $|U|= |\cL_1|$ or $|U|=0$ (in which case the algorithm terminates), respectively, then the corresponding rates obtained are optimal. Otherwise, the algorithm terminates at convergence.



\begin{theorem}\label{thm:main}
For $K=2$, Algorithm~FlowMax  converges to the optimal solution of Problem \eqref{eq:opt:2}.
\end{theorem}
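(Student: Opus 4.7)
The plan is to decompose the proof into three parts: (a) show that the sequence $R(\sfc)$ is monotone non-decreasing and bounded, hence convergent, so that the $\epsilon$-termination triggers in finitely many iterations; (b) establish optimality of the output in each of the three exit branches of the algorithm; and (c) tie everything together using the concavity result of Lemma~\ref{lem:concave}.

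For part (a), the key observation is that the redistribution in \textbf{Step IV} never shrinks any incoming cap $f_i$. For $i \in U$ the updated power $P_{si}^{\mathrm{new}} = e^{r_{i}} - 1 + \Delta/|\cL_1|$ gives $\log(1+P_{si}^{\mathrm{new}}) \geq r_{i}$, while for $j \in U^c$ the power is strictly increased, so $f_j^{\mathrm{new}} \geq f_j$. Consequently the bipartite flow $\{r_{il}\}$ that is feasible at iteration $\sfc$ remains feasible at iteration $\sfc+1$, so $R(\sfc+1) \geq R(\sfc)$. Boundedness of $R(\sfc)$ is immediate from $R(\sfc) \leq \sum_{j\in\cL_2} m_{jd}$ and $R(\sfc) \leq \log(1+P_s)$, so the sequence converges.

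For part (b), the two structural exit conditions are handled separately. When $|U|=|\cL_1|$, every $i\in\cL_1$ is downstream-limited: $r_i < f_i$ for all $i$. The source-side caps are strictly slack, so dropping them yields a relaxation of Problem~\eqref{eq:opt:2} whose value is the bipartite max-flow with only the $m_{jd}$ caps, and this upper bound is attained by the current $R(\sfc)$. This should be stated as Lemma~\ref{lem:U}. When $|U|=0$, every $i\in\cL_1$ saturates its cap, so $r_i=\log(1+P_{si})$ with $\sum_i P_{si} = P_s$. Here I would combine concavity (Lemma~\ref{lem:concave}) with a KKT/exchange argument, stated as Lemma~\ref{lem:nonvoidU}: the dual variables produced by the bipartite max-flow subproblem in \textbf{Step III} play the role of Lagrange multipliers for the source-power constraint in Problem~\eqref{eq:opt:2}, and at the fixed point these multipliers satisfy the complementary slackness conditions needed to certify global optimality.

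The main obstacle I expect is the $|U|=0$ branch at an unequal split. Since $\sum_i \log(1+P_{si})$ is maximized by the equal allocation, it is not a priori obvious why an unequal $\{P_{si}\}$ with $|U|=0$ should be globally optimal. The resolution is to exploit the downstream bipartite structure: at the fixed point the marginal source-side gain $1/(1+P_{si})$ at each node $i\in\cL_1$ must be matched, in a dual sense, to the shadow price of the corresponding node imposed by the $m_{jd}$ caps, so shifting source power away from any $i$ with a larger $P_{si}$ would be offset by a tighter downstream bottleneck rather than by a genuine improvement. A short first-order argument, combined with the concavity in Lemma~\ref{lem:concave} which upgrades first-order stationarity to global optimality, then completes the proof; the $\epsilon$-termination case follows by taking $\epsilon\to 0$ (or by noting that the same KKT characterization holds within $\epsilon$-slack).
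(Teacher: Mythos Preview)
Your decomposition into monotonicity plus case analysis on the exit branches mirrors the paper's structure, and your treatment of the $|U|=|\cL_1|$ branch matches Lemma~\ref{lem:U}. However, your handling of the $|U|=0$ branch diverges from the paper and contains a gap.

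You correctly identify ``the $|U|=0$ branch at an unequal split'' as the main obstacle, and you propose a KKT/dual-price argument to resolve it. The paper sidesteps this entirely: Lemma~\ref{lem:nonvoidU} proves that $|U(\sfc)|>0$ for every $\sfc\geq 2$, so the $|U|=0$ exit can occur \emph{only} at the first iteration, where the allocation is the equal split $P_{si}=P_s/|\cL_1|$. At that allocation $\sum_i f_i = |\cL_1|\log(1+P_s/|\cL_1|)$ is the concave maximum of $\sum_i \log(1+P_{si})$ subject to $\sum_i P_{si}\leq P_s$, so $R(1)=\sum_i f_i$ is immediately the global optimum. The argument you sketch---matching $1/(1+P_{si})$ to downstream shadow prices---is not obviously valid at an arbitrary unequal allocation: the dual variables of the bipartite LP in Step~III are multipliers on the \emph{individual} caps $r_i\leq f_i$, not on the aggregate constraint $\sum_i P_{si}\leq P_s$, and nothing in the algorithm forces those per-node multipliers to collapse into a single Lagrange multiplier for the source-power budget. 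Your proposed resolution would therefore need substantially more justification, whereas the paper's route (rule the case out by a short contradiction argument) is one paragraph.

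For the $\epsilon$-termination branch, the paper's argument (Lemma~\ref{lem:localopt}) is also cleaner than your KKT sketch: Step~IV \emph{strictly} enlarges every relevant cap (it guarantees $f_i(\sfc+1)>r_i(\sfc)$ for $i\in U$ and $f_i(\sfc+1)>f_i(\sfc)$ for $i\in U^c$), so $R(\sfc+1)=R(\sfc)$ means there is no ascent direction in a strictly open neighborhood of the current rate vector, and concavity (Lemma~\ref{lem:concave}) then upgrades this local stationarity to global optimality. Your ``take $\epsilon\to 0$'' remark does not by itself furnish a certificate at finite termination.
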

\begin{proof} 
Following Lemma \ref{lem:nondec}, Lemma \ref{lem:U}, Lemma \ref{lem:nonvoidU}, Lemma \ref{lem:localopt}, it follows that whenever the algorithm \text{FlowMax} stops (break condition is satisfied), a rate arbitrarily close to the optimal rate is achieved (specified by choosing any $\epsilon >0$).
\end{proof}

\begin{lemma}\label{lem:nondec} 
The sum-rate $R(\sfc)$ computed from layer $1$ to $2$ is non-decreasing in $c$.
\end{lemma}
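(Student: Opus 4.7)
The plan is to prove $R(\sfc+1)\ge R(\sfc)$ by showing that the iteration-$\sfc$ optimizer of \textbf{Step III} remains \emph{feasible} for the LP solved in iteration $\sfc+1$; then $R(\sfc+1)$, being a maximum over a set that contains this allocation, cannot be smaller. Let $\{r_i\}_{i\in\cL_1}$ (with per-link decomposition $\{r_{il}\}_{l\in O_i}$) denote the iteration-$\sfc$ optimizer, and let $P_{si}$, $f_i=\log(1+P_{si})$ be the associated source powers and thresholds at iteration $\sfc$.

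First I would verify that \textbf{Step IV} is power-preserving, so the updated allocation is a valid source strategy: the total power withdrawn from $U$ equals $\Delta=\sum_{i\in U}(P_{si}-e^{r_i}+1)$, while the total power redistributed across all of $\cL_1$ equals $|\cL_1|\cdot\Delta/|\cL_1|=\Delta$. Hence $\sum_{i\in\cL_1} P_{si}^{\mathrm{new}}=\sum_{i\in\cL_1}P_{si}=P_s$.

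The central step is to show that the updated thresholds $f_i^{\mathrm{new}}=\log(1+P_{si}^{\mathrm{new}})$ dominate the old per-node rates, namely $f_i^{\mathrm{new}}\ge r_i$ for every $i\in\cL_1$. For $i\in U$, the new power is $P_{si}^{\mathrm{new}}=e^{r_i}-1+\Delta/|\cL_1|\ge e^{r_i}-1$, so $f_i^{\mathrm{new}}\ge\log(e^{r_i})=r_i$. For $j\in U^c$, the new power is $P_{sj}^{\mathrm{new}}=P_{sj}+\Delta/|\cL_1|\ge P_{sj}$, so $f_j^{\mathrm{new}}\ge f_j=r_j$, where the final equality uses that $j\notin U$ means the iteration-$\sfc$ solution already saturated the incoming threshold.

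Finally, the downstream capacity constraints $\sum_{l\in I_j}r_{lj}\le m_{jd}$ for $j\in\cL_2$ are unchanged from iteration $\sfc$ to iteration $\sfc+1$, since they depend only on the fixed layer-$2$ powers $P_{jd}$. Combined with the threshold bound $f_i^{\mathrm{new}}\ge r_i$ of the previous step, this shows that the iteration-$\sfc$ per-link rate vector $\{r_{il}\}$ satisfies every constraint of the iteration-$(\sfc+1)$ LP; therefore $R(\sfc+1)\ge\sum_{i\in\cL_1}r_i=R(\sfc)$. The only delicate point is confirming $f_i^{\mathrm{new}}\ge r_i$ for saturated nodes $i\in U^c$ (they receive a nonnegative power increment, preserving saturation), but beyond this the argument is a direct feasibility-preservation check, so I do not anticipate a real obstacle.
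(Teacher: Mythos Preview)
Your proof is correct and follows essentially the same approach as the paper: both argue that after the power redistribution in \textbf{Step IV} the new thresholds satisfy $f_i^{\mathrm{new}}\ge r_i(\sfc)$ for every $i\in\cL_1$ (strictly, in fact, since $\Delta>0$ whenever the else branch is taken), so the iteration-$\sfc$ optimizer remains feasible at iteration $\sfc+1$ and hence $R(\sfc+1)\ge R(\sfc)$. Your write-up supplies the explicit power-conservation check and the case split $i\in U$ versus $i\in U^c$ that the paper leaves implicit, but the underlying idea is identical.
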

\begin{proof} From the definition of the algorithm,  $f_{j}(\sfc+1) > f_{j}(\sfc)$ for $j \in U^c(\sfc)$ and 
$f_{j}(\sfc+1) > r_{j}(\sfc)$ for $j \in U(\sfc)$. Thus, in each iteration, the effective constraints $f_j$ for sum-rate maximization in Step \textbf{ III}  are strictly enlarged.
\end{proof}

%
\begin{lemma}\label{lem:U} If in iteration $\sfc$, $U(\sfc) = \mathcal L_1$, then $R(\sfc)$ is optimal.
\end{lemma}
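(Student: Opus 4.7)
The plan is to show that under the hypothesis $U(\sfc)=\cL_1$, the value $R(\sfc)$ coincides with the optimum $R^\star$ of Problem \eqref{eq:opt:2}. The key observation driving the proof is that the source allocation $\{P_{si}\}$ enters the Step III subproblem only through the upper bounds $r_i \le f_i$; when every one of these bounds is strictly slack, the source allocation has no bearing on the optimal achievable flow between $\cL_1$ and the destination.

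First, I would form the relaxed subproblem $\cP_{\text{rel}}$ by deleting every constraint $r_i \le f_i$ from Step III, and denote its optimum by $R_{\text{rel}}$. Since $U(\sfc)=\cL_1$ gives $r_i < f_i$ strictly for every $i\in\cL_1$ at the Step III optimizer, I would argue $R(\sfc)=R_{\text{rel}}$ as follows. Step III is a concave maximization over a convex set (this follows from the argument of Lemma \ref{lem:concave} applied to the individual link rates). Suppose for contradiction that $R_{\text{rel}}>R(\sfc)$, and let $\{\tilde r_{il}\}$ be a maximizer of $\cP_{\text{rel}}$. For sufficiently small $\theta > 0$, the convex combination $(1-\theta)\{r_{il}^\star\}+\theta\{\tilde r_{il}\}$ still satisfies $r_i<f_i$ strictly for every $i$ (by continuity of the linear summation $\sum_{l\in O_i} r_{il}$), is therefore feasible for Step III, and attains an objective at least $(1-\theta)R(\sfc)+\theta R_{\text{rel}}>R(\sfc)$, contradicting the optimality of $R(\sfc)$ in Step III.

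Second, I would observe that the constraints retained in $\cP_{\text{rel}}$ involve only the per-node power budgets at $\cL_1$ (implicit in the parameterization $r_{il}=\log(1+P_{il})$, $\sum_{l\in O_i} P_{il}\le P_i$), the outgoing capacities $m_{jd}$ at $\cL_2$, and the graph topology between the two layers. Crucially, none of these depend on $\{P_{si}\}$. Consequently, for \emph{any} feasible source allocation satisfying $\sum_i P_{si}\le P_s$, any feasible point of Problem \eqref{eq:opt:2} induces a feasible point of $\cP_{\text{rel}}$ with the same objective value; hence $R_{\text{rel}}\ge R^\star$.

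Finally, the Step III optimizer $\{r_{il}^\star\}$ together with the source allocation $\{P_{si}\}$ in effect at iteration $\sfc$ forms a feasible solution of Problem \eqref{eq:opt:2} that attains the value $R(\sfc)$; thus $R^\star \ge R(\sfc)$. Chaining these inequalities gives $R^\star \ge R(\sfc) = R_{\text{rel}} \ge R^\star$, proving $R(\sfc)=R^\star$. The main subtlety is the first step: making the ``strictly inactive constraints can be dropped'' argument rigorous under the Step III feasible set, which implicitly couples the outgoing rates $\{r_{il}\}_{l\in O_i}$ at each node $i$ through the node's power budget. Once the concavity/convexity structure is established, the remainder of the proof is a clean upper-bound-plus-achievability argument.
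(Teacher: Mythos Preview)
Your proposal is correct and follows the same two-part logic as the paper's proof: when every constraint $r_i\le f_i$ is strictly slack, the Step~III value is simultaneously an upper bound on $R^\star$ (because the source constraints play no role) and achievable (by reducing the source-to-$\cL_1$ rates to match $r_i$). The paper simply asserts the upper-bound step in one sentence, whereas you make it rigorous via the relaxation $\cP_{\text{rel}}$ and a convex-combination argument to justify dropping the inactive constraints; this is a more careful write-up of the same idea rather than a different route.
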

\begin{proof}
If $U(\sfc) = \mathcal L_1 $, then $r_{i}(\sfc) < f_i(\sfc)$ for all $i \in \mathcal L_1$, 
where $r_{i}(\sfc)$ is the optimal rate computed by the optimal 
sum-rate algorithm for node $i\in \cL_1$ in Step \textbf{ III}. Thus, $R(\sfc) = \sum_{i\in\cL_1}r_{i1}$ is an upper bound on the achievable rate. 
This is also achievable by just reducing the rate from source to node $i$ 
from $f_i(\sfc)$ (achievable from Step 1 of iteration $t$) to $r_{i}(\sfc)$.
\end{proof}
\begin{lemma}\label{lem:nonvoidU} We have $\forall \ \sfc \geq 2, |U(\sfc)| > 0$. Furthermore,  $|U(1)| = 0$
will imply that $R(1)$ is the optimal throughput.
\end{lemma}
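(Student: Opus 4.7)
The plan is to attack the first claim via a persistence-of-cut argument, tracking a single bipartite min-cut from iteration $\sfc-1$ to iteration $\sfc$ and showing it remains a violating cut at $\sfc$; the second claim follows from concavity of $\log$. I will view the max-flow subproblem solved in Step III as a bipartite max-flow with source-edge capacities $f_i$, sink-edge capacities $m_{jd}$, and uncapacitated middle edges. By max-flow/min-cut duality, every (finite) cut has the form $\bigl(S_1, N(S_1)\bigr)$ on the source side for some $S_1 \subseteq \cL_1$, with capacity $\sum_{i \notin S_1} f_i + \sum_{j \in N(S_1)} m_{jd}$; the condition $|U|=0$ is equivalent to the Hall-type condition $\sum_{j \in N(S_1)} m_{jd} \ge \sum_{i \in S_1} f_i$ for every $S_1 \subseteq \cL_1$.

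For $\sfc \ge 2$, the fact that the algorithm entered iteration $\sfc$ forces $|U(\sfc-1)| > 0$, so some violating cut exists at iteration $\sfc-1$. Let $S_1^\star$ denote the source side of the canonical min-cut at iteration $\sfc-1$, i.e.\ the $\cL_1$-nodes reachable from $s$ in the residual graph. Two properties follow for free: (i) every $i \in U(\sfc-1)$ has strict slack $f_i(\sfc-1) - r_i(\sfc-1) > 0$ on its direct source edge, so $U(\sfc-1) \subseteq S_1^\star$ and in particular $S_1^\star \neq \emptyset$; (ii) by complementary slackness, every sink-edge from $N(S_1^\star)$ is saturated while no flow crosses from $\cL_1 \setminus S_1^\star$ into $N(S_1^\star)$, yielding the identity
\begin{equation*}
\sum_{i \in S_1^\star} r_i(\sfc-1) \;=\; \sum_{j \in N(S_1^\star)} m_{jd}.
\end{equation*}

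The technical heart is to show that the algorithm's redistribution strictly inflates the incoming caps on \emph{all} of $S_1^\star$. First observe that $\Delta > 0$ strictly, because each $i \in U(\sfc-1)$ contributes a strictly positive term $P_{si}(\sfc-1) - e^{r_i(\sfc-1)} + 1 > 0$ (using $r_i(\sfc-1) < f_i(\sfc-1) = \log(1+P_{si}(\sfc-1))$). Then for $i \in U(\sfc-1)$ the update gives $f_i(\sfc) = \log\bigl(e^{r_i(\sfc-1)} + \Delta/|\cL_1|\bigr) > r_i(\sfc-1)$; and for $i \in U^c(\sfc-1) \cap S_1^\star$ the update gives $P_{si}(\sfc) = P_{si}(\sfc-1) + \Delta/|\cL_1| > P_{si}(\sfc-1)$, hence $f_i(\sfc) > f_i(\sfc-1) = r_i(\sfc-1)$ (the last equality using $r_i = f_i$ on $U^c$). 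Summing over $i \in S_1^\star$ and invoking the identity above yields $\sum_{i \in S_1^\star} f_i(\sfc) > \sum_{j \in N(S_1^\star)} m_{jd}$, so $S_1^\star$ is still a violating cut at iteration $\sfc$, which forces $R(\sfc) < \sum_i f_i(\sfc)$ and hence $|U(\sfc)| > 0$.

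For the second claim, $|U(1)|=0$ together with the uniform initialization $P_{si}(1) = P_s/|\cL_1|$ yields $R(1) = |\cL_1|\log(1 + P_s/|\cL_1|)$. For any feasible alternative allocation $\{\tilde P_{si}\}$ with $\sum_i \tilde P_{si} \le P_s$, the corresponding max-flow is upper bounded by $\sum_i \log(1+\tilde P_{si})$, which by concavity of $\log$ and Jensen's inequality is at most $|\cL_1|\log(1+P_s/|\cL_1|) = R(1)$, so $R(1)$ is optimal. The main obstacle is the joint strict inequality $f_i(\sfc) > r_i(\sfc-1)$ over \emph{every} $i \in S_1^\star$ and not merely over $U(\sfc-1)$; this is exactly where the strict positivity of $\Delta$, which itself requires $|U(\sfc-1)| > 0$ (the hypothesis that put us at iteration $\sfc \ge 2$), becomes indispensable.
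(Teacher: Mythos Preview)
Your proof is correct. For the second claim both you and the paper invoke Jensen on the equal power split at the source. For the first claim the approaches differ: the paper argues by contradiction, assuming $|U(\sfc)|=0$ at an earliest $\sfc>1$, deducing that $r_i(\sfc)=f_i(\sfc)>r_i(\sfc-1)$ for every $i\in\cL_1$, and then asserting that this contradicts the optimality of Step~III at iteration $\sfc-1$ (since the $f$-constraints on $U(\sfc-1)$ were slack there, a strictly larger sum-rate should already have been achievable without reducing the rates on $U^c(\sfc-1)$). You instead give a direct structural argument: interpreting Step~III as a bipartite max-flow with uncapacitated middle edges, you isolate the source-side min-cut $S_1^\star\supseteq U(\sfc-1)$ via residual reachability, establish the flow identity $\sum_{i\in S_1^\star}r_i(\sfc-1)=\sum_{j\in N(S_1^\star)}m_{jd}$, and then show that the redistribution (with $\Delta>0$ forced by $|U(\sfc-1)|>0$) makes $f_i(\sfc)>r_i(\sfc-1)$ on all of $S_1^\star$, so $S_1^\star$ persists as a strict Hall-violating set at iteration $\sfc$. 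Your approach is more explicit---it names the obstruction and traces the strict inequality cleanly to $\Delta>0$---while the paper's is shorter and avoids the min-cut machinery, relying purely on the LP optimality at $\sfc-1$; the transfer of the improved rate vector back to the constraint set of iteration $\sfc-1$ is left implicit there, whereas your cut argument sidesteps that issue entirely.
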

\begin{proof}
The second statement is proved first. Notice that we started with equal power allocation from source to define
$f_i = \log(1+ \frac{P_s}{|\cL_1|}), \forall i \in \mathcal \cL_1$. 
Thus as discussed before, due to the concavity of the logarithm, 
$m_s := \sum_{i=1}^{|\cL_1|} f_i$  is the largest rate 
the source can transmit at. If $|U(1)| = 0$, this means that 
$r_{i} = f_i$ is achievable for all $i \in \mathcal L_1$, and hence 
$\sum_{i=1}^{|\cL_1|}r_{i} = \sum_{i=1}^{|\cL_1|}f_i = m_s$, the maximal  throughput from
source to $\mathcal L_1$, is achieved.

For the first statement of Lemma~\ref{lem:nonvoidU}, let $\sfc$ be the earliest iteration where $|U(\sfc)| = 0$ for $\sfc>1$. 
Thus, in iteration $\sfc$, $r_{i}(\sfc) = f_i(\sfc)$ for all nodes $i$ of layer $1$. For the set $U(\sfc-1)$
at iteration~$\sfc-1$, we claim that if $|U(\sfc)| = 0$ for $\sfc>1$, then $|U(\sfc-1)|=0$ as well, 
contradicting the existence of an earliest such instant for $\sfc>1$.
If $|U(\sfc-1)|>0$, then going from iteration $\sfc-1$ to $\sfc$, the constraint $f_i(\sfc-1)$ for 
$i \in U^c(\sfc-1)$ is relaxed to $f_i(\sfc) > f_i(\sfc-1)$, and  our algorithm also ensures 
$f_i(\sfc) > r_{i}(\sfc-1)$ for $i \in U(\sfc-1)$. 

Now  $U(\sfc) = 0$ will imply that the rates $r_i$ got increased for all $i \in \mathcal L_1$,
while
%
going from iteration $\sfc-1$ to $\sfc$. Thus, a larger sum-rate is feasible for nodes of $i \in U(\sfc-1)$ in iteration $\sfc-1$ without decreasing the rate for nodes of $i \in U^c(\sfc-1)$, contradicting the optimality of rate vector $[r_1(\sfc-1) \dots r_{|\mathcal L_1|}(\sfc-1)]$ found in Step \textbf{ III} of iteration $\sfc-1$.
\end{proof}
\begin{lemma}\label{lem:localopt} If the sum-rate satisfies $R(\sfc+1) = R(\sfc)$ for any iteration $\sfc$,
then the rate vector $\br(\sfc) = [r_1(\sfc) \dots r_{|\mathcal L_1|}(\sfc)]$ is a global maxima. 
\begin{proof} From the definition of the algorithm,  $f_{j}(\sfc+1) > f_{j}(\sfc)$ for $j \in U^c(\sfc)$ and 
$f_{j}(\sfc+1) > r_{j}(\sfc)$ for $j \in U(\sfc)$. Thus, if 
$R(\sfc+1) = R(\sfc)$, that means that in the strictly open neighborhood of $\br(\sfc)$, there is no ascent direction available. Since Problem \eqref{eq:opt:2} is concave, it follows that the rate output by algorithm 
\text{FlowMax} is in fact optimal if $R(\sfc+1) = R(\sfc)$.
\end{proof}
\end{lemma}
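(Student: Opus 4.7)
My plan is to convert the hypothesis $R(\sfc+1)=R(\sfc)$, together with the strict relaxation of source-side rate caps from iteration $\sfc$ to $\sfc+1$, into a local-maximality certificate for $\br(\sfc)$ in Problem~\eqref{eq:opt:2}, and then use the concavity from Lemma~\ref{lem:concave} to promote local optimality to global optimality.

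First I would verify that the Step~IV update produces $f_j(\sfc+1)>r_j(\sfc)$ componentwise in $\cL_1$. On $U^c(\sfc)$, where $r_j(\sfc)=f_j(\sfc)$, the rule $P_{sj}(\sfc+1)=P_{sj}(\sfc)+\Delta/|\cL_1|$ immediately gives $f_j(\sfc+1)>f_j(\sfc)=r_j(\sfc)$; on $U(\sfc)$ the rule $P_{sj}(\sfc+1)=e^{r_j(\sfc)}-1+\Delta/|\cL_1|$ gives $f_j(\sfc+1)>r_j(\sfc)$ directly. The scalar $\Delta$ is strictly positive since Lemma~\ref{lem:nonvoidU} guarantees $|U(\sfc)|>0$ at any non-terminal iteration. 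Consequently every source-side inequality is strictly slack at $\br(\sfc)$ under the iteration-$\sfc+1$ constraints. Combined with the unchanged middle-layer caps $m_{jd}$ and flow-conservation equalities, this places $\br(\sfc)$ in a relatively open region of the iteration-$\sfc+1$ feasible set, and the hypothesis $R(\sfc+1)=R(\sfc)$ asserts that no feasible point inside this region attains sum rate strictly exceeding $R(\sfc)$. Thus $\br(\sfc)$ is a local maximum of the max-flow objective over the feasible set of Problem~\eqref{eq:opt:2}.

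Second, I would invoke Lemma~\ref{lem:concave}: Problem~\eqref{eq:opt:2} is a concave maximization over a convex feasible set, and therefore every local maximum is also a global maximum. Concretely, if some feasible $\br'$ attained a strictly larger objective, the chord $(1-t)\br(\sfc)+t\br'$ would remain feasible by convexity, and by concavity its objective would dominate the convex combination of end values, yielding points arbitrarily close to $\br(\sfc)$ with strictly larger objective — contradicting the local maximality just established. Hence $\br(\sfc)$ is globally optimal, completing the argument.

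The delicate point I anticipate is making precise that local maximality of the Step~III sub-problem (which fixes the source powers at $P_{si}(\sfc+1)$) truly upgrades to local maximality in the ambient Problem~\eqref{eq:opt:2} (which also varies source powers). I would handle this by observing that for any small joint perturbation $(\delta P_{si},\delta r_{il})$ respecting $\sum_i\delta P_{si}\le 0$, the relaxed caps $f_j$ stay strictly above $r_j(\sfc)$ by the slack established in the first step, so the source-side constraints remain non-binding under the perturbation; the resulting objective change is $\sum_{i,l}\delta r_{il}$, which the Step~III optimum at iteration $\sfc+1$ already bounds above by zero under the unchanged $m_{jd}$ constraints. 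This closes the gap between the sub-problem and the joint problem, and the concavity argument applies verbatim.
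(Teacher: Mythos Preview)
Your proposal is correct and follows essentially the same approach as the paper: establish the strict relaxation $f_j(\sfc+1)>r_j(\sfc)$ for all $j\in\cL_1$, infer from $R(\sfc+1)=R(\sfc)$ that $\br(\sfc)$ is a local maximum, and then invoke concavity (Lemma~\ref{lem:concave}) to upgrade local to global optimality. Your third paragraph, reconciling optimality in the Step~III subproblem (fixed source powers) with optimality in the ambient Problem~\eqref{eq:opt:2} (variable source powers), is a useful elaboration of a point the paper's terse proof leaves implicit.
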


\subsection{Non-orthogonal links}\label{sec:lay}
Recall that our network model assumed non-interfering or orthogonal links at each node. However, Algorithm~FlowMax can also accommodate interfering links at the receivers. For example, if each node has multiple access constraints \cite{tse1998multiaccess} on its incoming edges, then the incoming rate constraints are polymatroidal, and we can extend our results for layered networks. Recall that we earlier argued that without any constraints on incoming edges of any node DAG is equivalent to a layered network. This assertion need not be true while incorporating interfering links. 
\begin{lemma} For a layered network (not necessarily a DAG) with $K=2$, Algorithm~FlowMax  converges to the optimal solution of Problem~\eqref{eq:opt:2} even when additional 
receiver side polymatroidal constraints are imposed on the incoming edges to any node.
\end{lemma}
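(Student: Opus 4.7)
The plan is to mirror the proof of Theorem~\ref{thm:main} and to verify that each of the four supporting lemmas (Lemma~\ref{lem:nondec}, Lemma~\ref{lem:U}, Lemma~\ref{lem:nonvoidU}, Lemma~\ref{lem:localopt}) extends when the scalar incoming-rate constraint $\sum_{l \in I_j} r_{lj} \leq m_{jd}$ at nodes $j \in \cL_2$ is replaced by polymatroidal constraints. The only change to Algorithm~FlowMax is in Step~III: instead of a single linear cap on the incoming sum-rate at each $j \in \cL_2$, we now impose the polymatroid intersection at $j$ together with the source-side caps $r_i \leq f_i$ for $i \in \cL_1$. This sub-problem is a polymatroidal max-flow between $\cL_1$ and $\cL_2$ with hard capacities $f_i$ on the source side, which is solvable exactly via the augmenting-path algorithm of \cite{Lawler}.

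Next, I would re-establish concavity. The argument of Lemma~\ref{lem:concave} only uses concavity of $\log(1+\cdot)$ on the outgoing links at each node together with a time-sharing argument on the source-side power split; since polymatroidal constraints define a convex region and intersect convexly with the existing feasible set, Problem~\eqref{eq:opt:2} remains a concave optimization in the underlying source power allocations $\{P_{si}\}_{i \in \cL_1}$.

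I would then walk through the four supporting lemmas and note that each uses only two structural ingredients: (i) that the feasible region of Step~III is strictly enlarged between consecutive iterations because $f_j(\sfc+1) > f_j(\sfc)$ for $j \in U^c(\sfc)$ and $f_j(\sfc+1) > r_j(\sfc)$ for $j \in U(\sfc)$; and (ii) concavity of the overall problem. Ingredient~(i) is purely a statement about the source-side caps and the redistribution rule in Step~IV, neither of which references the specific form of the incoming-side constraints at $\cL_2$. Thus Lemma~\ref{lem:nondec} (monotonicity of $R(\sfc)$), Lemma~\ref{lem:U} (if $U(\sfc)=\cL_1$ the computed $R(\sfc)$ is an achievable upper bound obtained by scaling down source power to match $r_i(\sfc)$), and the contradiction step in Lemma~\ref{lem:nonvoidU} (which turns on optimality of the Step~III solution under enlarged $f_i$) all go through verbatim. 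The second statement of Lemma~\ref{lem:nonvoidU} likewise only uses the source-to-$\cL_1$ bound $m_s = \sum_i f_i$, which is unaffected. Finally, Lemma~\ref{lem:localopt} combines strict enlargement of the feasible region with concavity to rule out ascent directions, and remains valid.

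The main obstacle, and the place that deserves the most care, is Step~III itself: one must check that when the sub-problem is replaced by a polymatroidal max-flow, the optimal outgoing rate profile $(r_i(\sfc))_{i \in \cL_1}$ still admits a meaningful $U(\sfc) = \{i : r_i(\sfc) < f_i(\sfc)\}$ with the property exploited in Lemma~\ref{lem:nonvoidU} — namely, that if relaxing $f_i$ for $i \in U^c(\sfc-1)$ and keeping $f_i > r_i(\sfc-1)$ for $i \in U(\sfc-1)$ strictly increases every $r_i$, it also strictly increases their sum. This follows from the polymatroidal structure: any coordinate-wise improvement in the achievable rate vector is feasible and weakly dominating, contradicting optimality of $(r_i(\sfc-1))$ at iteration $\sfc-1$. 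Having verified these points, the same termination and convergence argument as in Theorem~\ref{thm:main} completes the proof.
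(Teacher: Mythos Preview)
Your proposal is correct and follows essentially the same approach as the paper. The paper's own argument is a single sentence: the proof of Theorem~\ref{thm:main} carries over because the Step~III sub-problem remains concave once polymatroidal incoming constraints are added at layer~$2$. Your write-up is simply the unpacked version of that sentence --- you walk through Lemmas~\ref{lem:nondec}--\ref{lem:localopt} and check that each relies only on (i) strict enlargement of the source-side caps $f_i$ between iterations and (ii) concavity of the overall problem, neither of which is disturbed by intersecting with a polymatroid on the receiver side. The one cosmetic difference is that you invoke Lawler's augmenting-path algorithm to actually solve Step~III, whereas the paper is content to note concavity; this is a matter of emphasis rather than a different route.
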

The proof is essentially the same as that of Theorem \ref{thm:main}, since even with polymatroidal constraints enforced at nodes of layer $2$, the sum-rate maximization between layer $1$ and layer $2$ for each iteration is a concave problem as before. This result can be extended for any number of layers $K$, similar to the orthogonal links case in Section~\ref{sec:mult:layer}.

\def\Ni{\mathcal N_i}

\section{Multi-layer Network with $K > 2$} \label{sec:mult:layer}

\begin{figure}
\begin{centering}
\begin{tikzpicture}[node distance=2cm,scale=1]
\node (A) at (3, 0) {a};
\node (B) at (4.5, 1) {b};
\node (C) at (4.5, 0) {c};
\node (E) at (4.5, 2) {e};
\node (F) at (3, 1) {f};
\draw (C) circle(0.2cm);

\node (A') at (6, 0) {a};
\node (B') at (7.5, 1) {b};
\node (C') at (7.5, 0) {c};
\node (E') at (7.5, 2) {e};
\node (F') at (6, 1) {f};
\draw (C') circle(0.2cm);
\draw (B') circle(0.2cm);

\draw[->]
  (A) edge (B)  (A) edge (C)  (F) edge (B) (F) edge (E) ;
  \draw[->]
  (A') edge (B')  (A') edge (C')  (F') edge (B') (F') edge (E') ;

\end{tikzpicture}
\caption{Illustration of power reallocation in PowerAug for a layer connected network.~\label{fig:power}}
\end{centering}
\end{figure}
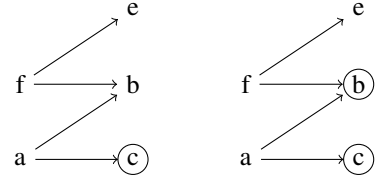

In this section, we generalize the algorithm FlowMax to  $K>2$ layers. For ease of exposition,  we only describe the algorithm, when the network is layer-connected, i.e., any two nodes $k_1,k_2$ of layer $j$ are reachable from each other by using only edges (without considering the direction) between layer $j-1$ and $j$, e.g. see Fig. \ref{fig:power}, where nodes $c, b,$ and $e$ are reachable from each other using edges from only the preceding layer. We omit the details for a general network where to reach node $k_1$ from $k_2$ both of layer $j$, we might have to reach nodes of layers $j-2$ or lower (worst case,  the source) as shown in Fig. \ref{fig:power2}.

Let $P_{uv}$ be the power assigned on the directed edge which connects node $u$ to $v$ of two consecutive layers,
and $r_{uv}:= \log(1+P_{uv})$. We will take $P_{us}$ (for source $s$)
to be infinity, as there is no constraint before the source node. Let $d_u$ denote the
outgoing degree of node~$u$. 
For a set $S \subset \cL_\ell$, let 
$\cN(S)$ be the set of nodes in the preceding layer $\cL_{\ell-1}$, having an edge to some node of $S$, i.e., $\cN(S) =\{j: j\in \cL_{\ell-1}, e(j,v) = 1, \ \text{for some} \ v \in S\}.$
\def\ww{\phantom{ww}}
\begin{figure}[H]

\hrule

\vspace*{0.15cm}
\noindent ALGORITHM~FlowMax-II

\vspace*{0.15cm}
\hrule

\noindent \textbf{Step I:} 
Let $P_{iv} =  \frac{P_i}{d_i}, \forall v \in O_i$, and  $P_{js} = \infty$ for the source node~$s$. Start
with layer~$k=2$. 
\\ Set count $\sfc=1$. \\

\noindent \textbf{Step II:} (Vector $\{r_i, i \in \cL_{k-1}\}, \cP) =$ LayerOPT($k, \cP$).  \\
$$
\Delta = \sum_{i\in \cL_{k-1}} {f_i} - \sum_{i \in \cL_{k-1}} r_i.
$$ 
\noindent \textbf{Step III:} Declare $U(\sfc) =\{i\in \cL_{k-1}: r_i < f_i \}$. \\
\textbf{If} $\bigl((U(\sfc)=\emptyset) \text{ OR }  (U(\sfc)=|\cL_{k-1}|) \text{ OR } (k=1) \text{ OR } (\Delta \leq \epsilon) \bigr)$ \\
\ww $k=k+1$.\\
\textbf{Else} \\
\ww Update Powers: $\cP  = \text{PowerAug}(k, U(\sfc), \cP)$. \\
\ww For $\ell=k-2$ to $2$, run LayerOPT($\ell,\cP$), sequentially. \\
Set $k=1$.
\textbf{EndIf} \\
\noindent \textbf{Step IV:} $\sfc=\sfc+1$, go to Step~II if $k < K$. If $k=K$, break.

\hrule

\end{figure}

\begin{figure}
\begin{centering}
\begin{tikzpicture}[node distance=2cm,scale=1]
\node (A) at (3, 0) {a};
\node (B) at (4, 1) {b};
\node (C) at (4, 0) {c};
\node (E) at (4, 2) {e};
\node (F) at (3, 1) {f};
\draw (C) circle(0.2cm);
\node (G)  at (3,2){g};
\node (H)  at (2,2){h};
\node (I)  at (2,1){i};


\node (A') at (6, 0) {a};
\node (B') at (7, 1) {b};
\node (C') at (7, 0) {c};
\node (E') at (7, 2) {e};
\node (F') at (6, 1) {f};
\draw (C') circle(0.2cm);
\draw (B') circle(0.2cm);
\draw (F') circle(0.2cm);
\node (G')  at (6,2){g};
\node (H')  at (5,2){h};
\node (I')  at (5,1){i};

\draw[->]
  (A) edge (B)  (A) edge (C)  (F) edge (B) (G) edge (E)  (I) edge (A)  (I) edge (F) (I) edge (G) (H) edge (G) ;
  \draw[->]
  (A') edge (B')  (A') edge (C')  (F') edge (B') (G') edge (E') (I') edge (A')  (I') edge (F') (I') edge (G') (H') edge (G');

\end{tikzpicture}
\caption{Illustration of power reallocation in PowerAug for a non-layer connected network.~\label{fig:power2}}
\end{centering}
\end{figure}
\begin{figure}[H]

\hrule

\vspace*{0.15cm}
\noindent SUBROUTINE~LayerOPT($k,\cP$)

\vspace*{0.15cm}
\hrule

\noindent Input: Layer~$k$, Power allocation~$\cP$ of all links.\\
\noindent Output: Rate vector $r_i, i\in \cL_{k-1}$ and associated new power allocation~$\cP$ of all links.\\
\noindent \textbf{Step I:} Assign  
\vspace{-0.1in}
\begin{align*}
f_i &= \sum_{u \in I_i} \log (1 + P_{ui}), \forall i \in \cL_{k-1} \\
g_j &= \sum_{v \in O_j} \log(1 + P_{jv}),  \forall j \in \cL_{k}.
\end{align*}

\noindent \textbf{Step II:}  Find optimal $r_i = \sum_{j\in O_i} r_{ij}$ (sum-rate out of node $i$ in layer $k-1$, where $r_{ij}= \log(1+P_{ij}), \forall i \in \cL_{k-1}$) by solving
\begin{align} \label{eq:layer:opt}
\max \sum_{i\in \mathcal L_{k-1}} r_i \text{ such that } r_i \leq f_i \text{ and } \sum_{i \in I_j} r_{ij} \leq g_j.
\end{align}

\hrule
\end{figure}
Following the same philosophy of ALGORITHM~FlowMax for the $K=2$-layer network,  ALGORITHM~FlowMax-II for a multi-layer connected network (MLN) is proposed where the main idea is to sequentially solve a sub-problem (LayerOPT($k,\cP$)), that is a sum-rate maximization problem between layer~$k-1$ and layer~$k$ for some $k$, with incoming flow constraints $f_i, i\in \cL_{k-1}$ from layer $k-2$ and outgoing flow constraints $g_j$ for nodes $j$ of layer $k$. 

Starting with $k=2$, solving LayerOPT($k,\cP$) gives rate $r_i$ for $i\in\cL_{k-1}$. If the sum-rate ($\sum_{i\in\cL_{k-1}}r_i$) output by LayerOPT($k,\cP$) is almost (additive difference of $\epsilon$) equal to the sum of incoming flow constraints ($\sum_{i\in\cL_{k-1}}f_i$) from layer $k-2$, we move to the next layer, and solve LayerOPT($k+1,\cP$). 
Otherwise, we need to reduce (increase) the rate coming into nodes $i\in \cL_{K-1}$ for which $r_i<f_i$ ($r_i=f_i$) by reallocating the power on outgoing links of nodes of layer $k-2$, as done in subroutine PowerAug. This power reallocation, changes the outgoing flow constraints for nodes in layer $k-2$, and subsequently to maintain feasibility, we find LayerOPT($j,\cP$) iteratively for $j=k-2$ till layer $2$ subject to the new outgoing rate constraints from nodes of layer $k-2$.

\begin{figure}

\hrule
\noindent SUBROUTINE~PowerAug
\hrule 

Input = ($k, U, \cP = \{P_{uv}\}$): Layer $k$, set $U$ of layer $k-1$, 
current power allocation $\cP$ for all nodes. \\
For a set $S \subset \cL_\ell$, let \\
$\cN(S) =\{j: j\in \cL_{\ell-1}, e(j,v) = 1, \ \text{for some} \ v \in S\}$, \\ i.e., the set of nodes of 
the preceding layer than have an edge to some node of $S$.
For $v\in U$ of layer $k-1$, \\ 
let $r_v = \sum_{i\in \mathcal L_{k}} r_{vi}$ and $f_v = \sum_{t\in \mathcal L_{k-2}} \log(1+P_{tv})$, \text{where} \ $f_v > r_v$ since $v \in U$.\\
{\bf While}{ there exists any $u \in \cN(U) \cap \cN(U^c)$ of layer $k-2$}, {\bf do}\\
Pick any $v\in U$\\ 
{\bf If}{ $\sum_{t\in \mathcal L_{k-2} \backslash u} \log(1+P_{tv}) > r_v
$} \% incoming rate into $v$ from nodes other than $u$ is $> r_v$ \%\\
\ \ \ \ \ $P_{uv} = \frac{P_{uv}}{2}$, \\
\ \ \ \ \  $P_{uw} = P_{uw} + \frac{P_{uv}}{2d_u}, \ \\
 \forall \  w \in U^c$ such that $(u,w)$ is an edge ;\\
where $d_u$ is the out-degree of $u$ with edges in $U^c$\\
{\bf Else}\\ 
Let $\log(1+{\tilde P}) = r_v - \sum_{t\in \cL_{k-2}\backslash u} \log(1+P_{tv})$\\
$P_{uv} = {\tilde P} + \frac{P_{uv}-{\tilde P}}{1+d_u}$,\\
$P_{uw} = P_{uw} + \frac{P_{uv}-{\tilde P}}{1+d_u}, \  \forall \  w \in U^c$ such that $(u,w)$ is an edge;\\
{\bf End If}\\  
Update $U = U \cup \{w \in U^c\}$ for $w$ such that $(u,w)$ is an edge \\ 
{\bf End While} \\
\hrule
\end{figure}

With the layer connected network assumption, the power reallocation in subroutine PowerAug is done as follows. In first iteration, set of nodes $i \in \cL_{k-1}$  with $r_i<f_i$ ($r_i=f_i$) are called $U(0)$ $(U^c(0))$. 
We find a node of layer $k-2$ that has an edge to both sets $U(0)$ and $U^c(0)$ and decrease the power on the link towards $U(0)$ and increase it on all links of $U^c(0)$, such that for nodes in $U(0)$ even after updation $f_i > r_i$. Then we include the nodes of layer $l-1$ for which power has been increased on at least one incoming link, into set $U(0)$ (remove it from $U^c(0)$) and call it $U(1)$. Repeat the above process until there is any node in $U^c(0)$.  For example, see Fig. \ref{fig:power} where the considered two layers are  layer-connected, and only node $c$ (circled) is part of $U(0)$. Hence, power is decreased on link $(a,c)$ and increased on $(a,b)$. Subsequently, node $b$ is made part of $U(1)$ (circled) and power is decreased on link $(f,b)$ and increased on $(f,e)$. Since the network is layer connected, it is easy to see that at the end of this procedure, all nodes of layer $k-1$ have their $f_i$ increased for $i \in U^c(0)$ and $f_i > r_i$ (by choice) for $i\in U(0)$.

When the network is not layer-connected as shown in Fig. \ref{fig:power2}, the power is reallocated via node i and not directly via node f.

\begin{theorem}\label{thm:maingen}
Algorithm~FlowMax-II converges to the optimal solution of Problem \eqref{eq:opt:2} if the network is layer connected.
\end{theorem}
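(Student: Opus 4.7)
The approach is to generalize the four-lemma structure used for the $K=2$ case (Theorem~\ref{thm:main}) by induction on the layer index~$k$, leveraging the concavity of Problem~\eqref{eq:opt:2} established in Lemma~\ref{lem:concave}. The key observation is that each invocation of the subroutine LayerOPT$(k,\cP)$ is itself a $K=2$-type problem between layers $\cL_{k-1}$ and $\cL_{k}$, with the incoming rate constraints $f_i$ inherited from the preceding layer and the outgoing rate constraints $g_j$ inherited from the succeeding layer. Thus, optimality of each LayerOPT call follows directly from Theorem~\ref{thm:main} applied to the appropriate two-layer sub-network.

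The main plan is to proceed as follows. First, I would show that the forward sweep across layers $k=2,3,\dots,K$ yields monotone progress, by verifying a statement analogous to Lemma~\ref{lem:nondec}: each call to LayerOPT$(k,\cP)$ and each invocation of PowerAug strictly relaxes the active constraint set (either $f_i$ is enlarged for $i \in U^c$, or $f_i$ is reset to a value strictly larger than the previously computed $r_i$ for $i \in U$), hence the sum-rate at every layer is non-decreasing across iterations. Second, I would handle the three trivial break conditions at each layer: (i) $U(\sfc)=\emptyset$, where all incoming rates are fully absorbed and the layer is locally optimal (analogue of Lemma~\ref{lem:nonvoidU}); (ii) $U(\sfc)=\cL_{k-1}$, where every node is unsaturated and no further power redistribution from layer $k-2$ can increase flow beyond the current LayerOPT output (analogue of Lemma~\ref{lem:U}); and (iii) $\Delta \le \epsilon$, which is the approximate convergence criterion.

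The crucial non-trivial step is to show that the backward sweep driven by PowerAug correctly redistributes power in layer $\cL_{k-2}$ so that, after the cascade of LayerOPT$(\ell,\cP)$ calls for $\ell = k-2$ down to $\ell=2$, the overall allocation is improved and remains feasible. Here the layer-connected assumption is essential: it guarantees that for any partition $(U,U^c)$ of $\cL_{k-1}$, the set $\cN(U)\cap \cN(U^c)$ is non-empty at each hop, so the while-loop of PowerAug propagates excess power from every saturated node to every unsaturated node through the preceding layer. I would verify that PowerAug terminates with $U$ having grown to all of $\cL_{k-1}$, and that along the way the new $P_{uv}$ values preserve the currently achievable rate vector while strictly enlarging $f_w$ for all $w\in U^c$. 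Combined with monotonicity, this prevents cycling.

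Finally, I would close the argument by a concavity/local-to-global step analogous to Lemma~\ref{lem:localopt}. When the forward sweep reaches $k=K$ and no more breaking conditions trigger a backward cascade, the sum-rate has stopped improving across a full pass of all layers. This means that in a strictly open neighborhood of the current power allocation $\cP$ (obtained by perturbing the $P_{uv}$ along the directions opened by LayerOPT and PowerAug), there is no ascent direction for the end-to-end flow. Since Problem~\eqref{eq:opt:2} is concave by Lemma~\ref{lem:concave}, absence of local ascent implies global optimality, and the output of Algorithm~FlowMax-II is within $\epsilon$ of the optimum (with $\epsilon$ chosen arbitrarily small). The main obstacle I anticipate is the bookkeeping required to show that the backward cascade of LayerOPT calls does not undo the rate gains at later layers; the layer-connectedness and the strict-inequality preservation enforced by PowerAug (setting $f_i$ strictly greater than $r_i$ rather than equal) are precisely the mechanisms that rule this out.
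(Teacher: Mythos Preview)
Your proposal follows the paper's overall strategy---reduce to two-layer subproblems, use PowerAug to propagate slack, and finish with concavity---but it contains a genuine gap in the monotonicity argument. You claim that ``the sum-rate at every layer is non-decreasing across iterations,'' justified by constraint relaxation at each LayerOPT and PowerAug call. This is false in the multi-layer setting: when PowerAug redistributes the powers $P_{uv}$ for $u \in \cL_{k-2}$, it alters the outgoing constraints $g_j = \sum_{v \in O_j}\log(1+P_{jv})$ at nodes $j \in \cL_{k-2}$, and the subsequent backward call LayerOPT$(k-2,\cP)$ can legitimately return a \emph{smaller} sum-rate than the previous forward pass did. The paper states this explicitly (``it is possible that $\overleftarrow{R}_k \le \overrightarrow{R}_k$ in consecutive updates''), so per-layer monotonicity is simply not the right invariant.

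The paper's fix is to track a different quantity: the \emph{bottleneck} rate $\min_{1\le \ell \le K} R_\ell$. Lemmas~\ref{lem:multi1}--\ref{lem:multi3} show that after PowerAug at a bottleneck layer $b$ and the full backward--forward sweep, every layer rate stays at least $R_b$, and then $R_b$ itself does not decrease (Lemma~\ref{lem:multifinal}). Hence only the minimum layer rate is guaranteed non-decreasing, and that is enough to force convergence and eventually avoid the Else branch. You correctly anticipate the obstacle (``the backward cascade \dots\ does not undo the rate gains''), but the mechanisms you cite---layer-connectedness and the strict inequality $f_i > r_i$---only guarantee that the rate at layer $b$ itself is preserved; they do not prevent intermediate layers from dipping. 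To close the argument you must replace the per-layer monotonicity claim with the bottleneck-rate invariant and supply the chain $\overleftarrow{R}_k \ge R_b$ and $\overrightarrow{R}_k(\text{new}) \ge \overleftarrow{R}_k$ for all $k \le b$.
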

\begin{proof}
If Algorithm~FlowMax-II never encounters the Else condition in Step III, i.e., it never encounters a bottleneck layer and power allocations on previous layers need not be updated, then the optimality is obvious from Lemma \ref{lem:U}, Lemma \ref{lem:nonvoidU}, and Lemma \ref{lem:localopt}.
If Algorithm~FlowMax-II does encounter the Else condition in Step III for some iteration, then in Lemma \ref{lem:multifinal}, we show via Lemma \ref{lem:multi1}, \ref{lem:multi2}, and, \ref{lem:multi3} that 
 the minimum achievable intra-layer sum-rate ($\min_k \sum_{i\in\cL_{k}}r_i$) is non-decreasing in any iteration. Eventually, the Else condition in Step III will not be encountered for any layer, and the optimality will follow from Lemma \ref{lem:U}, Lemma \ref{lem:nonvoidU}, and Lemma \ref{lem:localopt}.
\end{proof}
\begin{definition} Let the sum-rate ($\sum_{i \in \cL_{k-1}} r_i$) between layer $k-1$ to layer $k$ be defined as $R_{k-1}$. With the Algorithm~FlowMax-II, sum-rates $R_{i}, i=1, \dots, K$ are updated sequentially from left to right, and then right to left whenever Else condition in Step III is encountered (a bottleneck layer). To distinguish between left to right and right to left updates, 
we define $\overrightarrow{R}_k$ and $\overleftarrow{R}_k$ as the rate achieved between layer $k-1$ to layer $k$ on the left to right (forward) and right to left (backward) 
iterations, respectively. 
\end{definition}
\begin{lemma}\label{lem:multi1} Let the Algorithm~FlowMax-II be working on layer $\cL_b$ and satisfy the Else condition in Step III, i.e., it has hit a bottleneck and power allocations on previous layers needs to be updated. Let the current sum rate from layer $\cL_b$ to $\cL_{b+1}$ be $R_{b}$. Then after power augmentation (by subroutine \text{PowerAug}) on outgoing edges from layer $\cL_{b-1}$ towards layer $\cL_b$, the sum-rate from layer $\cL_{b-1}$ to $\cL_{b}$, defined $R_{b-1}$ is at least as much as $R_{b}$. 
\end{lemma}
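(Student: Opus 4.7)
The plan is to show that \text{PowerAug} preserves the per-node invariant $f_v \ge r_v$ for every $v \in \cL_b$, where $r_v$ is the outgoing sum-rate of $v$ towards $\cL_{b+1}$ (which \text{PowerAug} does not touch, since it modifies only edges between $\cL_{b-1}$ and $\cL_b$). Because the sum-rate from $\cL_{b-1}$ to $\cL_b$ satisfies $R_{b-1} = \sum_{(u,v)} \log(1+P_{uv}) = \sum_{v \in \cL_b} f_v$, once the invariant is established for the updated powers, summing over $v$ yields $R_{b-1}^{\text{new}} = \sum_v f_v^{\text{new}} \ge \sum_v r_v = R_b$, which is the bound to be proved.

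At entry to \text{PowerAug}, by definition of $U$ we have $f_v > r_v$ for $v \in U$ and $f_v = r_v$ for $v \in U^c$ (otherwise $v$ would belong to $U$). In a single iteration of the while loop a node $u \in \cN(U)\cap\cN(U^c)$ of $\cL_{b-1}$ shifts some power from an edge $(u,v)$ with $v \in U$ onto its edges $(u,w)$ with $w \in U^c$. For every such $w$ the power $P_{uw}$ strictly increases, so $f_w^{\text{new}} > f_w^{\text{old}} = r_w$ and $w$ is absorbed into $U$. For $v$ itself, the If branch is entered precisely when $\sum_{t\ne u}\log(1+P_{tv}) > r_v$, so the post-update $f_v^{\text{new}} > r_v$ trivially; in the Else branch $\tilde P$ is chosen so that $\log(1+\tilde P) + \sum_{t\ne u}\log(1+P_{tv}) = r_v$, and the new power $\tilde P + (P_{uv}-\tilde P)/(1+d_u)$ strictly exceeds $\tilde P$ (using $P_{uv} > \tilde P$, which is forced by the current $f_v > r_v$), again giving $f_v^{\text{new}} > r_v$. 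Hence the invariant is restored after every step of the loop.

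Since the network is layer-connected, $\cN(U)\cap\cN(U^c) \ne \emptyset$ whenever $U^c \ne \emptyset$, so the while loop terminates only when $U$ has expanded to cover all of $\cL_b$. At termination every $v \in \cL_b$ either remained in the original $U$ (with $f_v \ge r_v$ preserved through each touch) or was added to $U$ via a strict power increase on some incoming edge (so $f_v^{\text{new}} > r_v$), and the summation step is immediate. The main technical obstacle is tracking this invariant when a single $v$ is touched by several successive choices of $u$: one must check inductively that the Else branch is well-defined at each touch, i.e., that the current residual $\sum_{t\ne u}\log(1+P_{tv})$ still produces a threshold $\tilde P$ with $P_{uv} > \tilde P$. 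This follows from carrying forward the strict inequality $f_v > r_v$ guaranteed by the previous step, which closes the induction and completes the argument.
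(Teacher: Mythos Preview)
Your proposal is correct and follows the same approach as the paper: both argue that after \text{PowerAug} the per-node incoming rates satisfy $f_v \ge r_v$ for every $v\in\cL_b$, and then sum to obtain $R_{b-1}=\sum_v f_v \ge \sum_v r_v = R_b$. The paper's proof simply asserts this property of \text{PowerAug} in one sentence, whereas you explicitly verify that each branch of the while loop preserves the strict inequality $f_v>r_v$ (and upgrades $f_w=r_w$ to $f_w>r_w$ when $w$ is absorbed into $U$), which is a useful expansion of the same idea.
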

\begin{proof}Subroutine \text{PowerAug} ensures that even after power augmentation on outgoing edges from layer $\cL_{b-1}$ towards layer $\cL_b$, the rate $R_{b}$ is achievable from $\cL_b$ to $\cL_{b+1}$, i.e., after power augmentation, the incoming sum-rate from layer $\cL_{b-1}$ to layer $\cL_b$, $\sum_{i\in \cL_b} f_i \ge R_b$. Since $R_{b-1} = \sum_{i\in \cL_b} f_i$, we have $R_{b-1} \ge R_b$.
\end{proof} 

\begin{lemma}\label{lem:multi2} Let the Algorithm~FlowMax-II be working on layer $\cL_b$ and satisfy the Else condition in Step III. Let the current sum rate from layer $\cL_b$ to $\cL_{b+1}$ be $R_{b}$. Then when subroutine LayerOPT is run for layers $\cL_{b-2}$ till layer $2$ from right to left, consecutively, the sum-rate obtained on layer $\cL_k$, $2\le k \le   {b-2}$ (defined as $\overleftarrow{R}_k$) is at least as much as $R_b$. 
\end{lemma}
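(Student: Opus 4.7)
The plan is to proceed by induction in the order traversed by the backward sweep, taking Lemma~\ref{lem:multi1} as the seed at the transition just below the bottleneck and propagating the inequality $\overleftarrow{R}_\ell\ge R_b$ one layer transition at a time down to the earliest layer touched by the backward pass. The argument rests on what LayerOPT sees at each backward step: the incoming rate constraints $f_i$ for the layer it is optimizing are inherited from the forward pass (since neither PowerAug nor any later backward LayerOPT has yet rewritten the slice that feeds them), while the outgoing rate constraints $g_j$ are exactly the powers just installed by PowerAug or by the preceding backward LayerOPT iteration.

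First I would record a monotonicity observation about the forward phase: since each forward LayerOPT($k$) enforces $r_i\le f_i$, its output sum-rate cannot exceed the sum of incoming rates, so $\overrightarrow{R}_1\ge\overrightarrow{R}_2\ge\cdots\ge\overrightarrow{R}_{b-1}>R_b$. This immediately supplies $\sum_i f_i\ge R_b$ at every backward step $\ell\le b-2$. For the outgoing side, the induction hypothesis (or Lemma~\ref{lem:multi1} at the seed) gives that the installed powers support $\overleftarrow{R}_{\ell+1}\ge R_b$ across the next slice, which in turn means the per-node $g_j$ vector admits an inflow of rate at least $R_b$. Since LayerOPT is the concave program of Lemma~\ref{lem:concave} that maximises the sum-rate subject to the $(f_i,g_j)$ constraints, it suffices to exhibit a feasible flow of size $R_b$ on the current bipartite slice, and optimality of LayerOPT then yields $\overleftarrow{R}_\ell\ge R_b$.

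The main obstacle is this per-node feasibility step, since the achievable flow across a bipartite slice with non-polymatroidal rate constraints need not equal $\min(\sum_i f_i,\sum_j g_j)$: individual $g_j$ entries may have shifted in opposite directions relative to the forward pass and the topology can block naive reroutings. I would handle it by starting from the scaled-down forward allocation, with each edge power rescaled by the factor $R_b/\overrightarrow{R}_\ell\le 1$ (which is automatically feasible against the old $g_j$ values), and then applying the same localized pairwise redistribution moves used inside PowerAug to reconcile the rescaled allocation with the new per-node $g_j$. The layer-connected hypothesis supplies exactly the backbone of edges along which such moves can be chained between any saturated and any unsaturated node of the layer, while the concavity of $\log(1+\cdot)$ ensures each move preserves both feasibility and the total rate $R_b$. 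Once this local feasibility is established, the optimality of LayerOPT delivers $\overleftarrow{R}_\ell\ge R_b$ and closes the induction on $\ell$.
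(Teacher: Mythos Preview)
Your proposal and the paper's proof take quite different routes. The paper's argument is a two–sentence sketch: it records that $\overrightarrow{R}_k\ge R_b$ for every $k<b$ (because $b$ is the first layer at which the forward sweep stalls), and then asserts ``by continuity'' that after PowerAug perturbs the powers on the $\cL_{b-1}\!\to\!\cL_b$ slice, a rate of at least $R_b$ remains achievable on every earlier slice, hence $\overleftarrow{R}_k\ge R_b$. No induction, no explicit witness, and no use of layer-connectedness is given.

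Your plan is essentially an attempt to make that continuity hand-wave rigorous, via a backward induction seeded by Lemma~\ref{lem:multi1} and a constructive feasibility certificate at every slice. That structure is sound and is in fact the natural way to formalise what the paper leaves implicit. The difference in what each buys: the paper's argument is short but relies on an unexplained continuity step; yours identifies exactly what must be checked at each backward step (existence of a feasible flow of value $R_b$ in the bipartite LayerOPT instance) and tries to build it.

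One comment on your feasibility construction. Rescaling the forward edge \emph{powers} by $R_b/\overrightarrow{R}_\ell$ and then invoking PowerAug-style moves is more work than necessary, and the ``moves'' step is itself not fully justified. A cleaner certificate for the inductive step is the following: when the backward LayerOPT($\ell+1$) ran, its incoming constraints were exactly the per-node rates $f_j=\sum_{i\in I_j} r_{ij}^{\mathrm{fwd}}$ delivered by the still-untouched forward allocation on the $\cL_{\ell-1}\!\to\!\cL_\ell$ slice, so its output obeys $g_j^{\mathrm{new}}\le \sum_{i\in I_j} r_{ij}^{\mathrm{fwd}}$ for every $j\in\cL_\ell$. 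Column-scaling the forward rates by $g_j^{\mathrm{new}}/\sum_i r_{ij}^{\mathrm{fwd}}\le 1$ then yields a feasible flow of value $\sum_j g_j^{\mathrm{new}}=\overleftarrow{R}_{\ell}\ge R_b$, with no need for layer-connectedness or rerouting. The only place this shortcut does not apply verbatim is at the seed step right after PowerAug, where the $g_j^{\mathrm{new}}$ come from a power redistribution rather than from a LayerOPT, and can in principle exceed the forward in-rate at some nodes; there your rerouting idea (or a direct appeal to Lemma~\ref{lem:multi1} plus layer-connectedness) is indeed what is needed.
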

\begin{proof} It is important to note that when Algorithm~FlowMax-II is working on layer $\cL_b$ and satisfy the Else condition in Step III, then the current sum-rate $\overrightarrow{R}_k$ on all layers $\cL_k, k < b$, (found for layers from left to right until previous iteration) satisfies $\overrightarrow{R}_k \ge R_{b}$, since layer $b$ is the current bottleneck. Since a larger rate than $R_{b}$ is achievable on all previous layers, even after power augmentation, to change the  sum-rate from layer $\cL_{b-1}$ to layer $\cL_b$, by continuity, a rate larger than $R_{b}$ is still achievable on previous layers, implying that $\overleftarrow{R}_k \ge R_{b}$.  
\end{proof}

Note that it is possible that 
$\overleftarrow{R}_k \le \overrightarrow{R}_k$ in consecutive updates, but we only need that 
$\overleftarrow{R}_k \ge \min_{1\le \ell \le b} R_\ell = R_b$ for all $k \le b$.
\begin{lemma}\label{lem:multi3} Let the Algorithm~FlowMax-II be working on layer $\cL_b$ and satisfy the Else condition in Step III. Let the current sum rate from layer $\cL_b$ to $\cL_{b+1}$ be $R_{b}$. Let the subroutine LayerOPT has been run for layers $\cL_{b-2}$ till layer $2$ from right to left, and $\overleftarrow{R}_k$ has been found. Then when the Algorithm~FlowMax restarts going from left to right, let the  sum-rate in layer $k$ be defined as $\overrightarrow{R}_k$. Then $\overrightarrow{R}_k \ge \overleftarrow{R}_k$ for each $k$ till layer $b$.
\end{lemma}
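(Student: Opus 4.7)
The plan is to prove $\overrightarrow{R}_k \ge \overleftarrow{R}_k$ by induction on $k$ from $k=2$ up to $k=b$, with the key tool being monotonicity of LayerOPT with respect to its two constraint families: the incoming rates $\{f_i\}_{i \in \cL_{k-1}}$ and the outgoing rates $\{g_j\}_{j \in \cL_k}$. Since LayerOPT$(k)$ is equivalent to a bipartite max-flow with source capacities $f_i$ and sink capacities $g_j$, a weak relaxation of either family can only increase its optimal value.

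For the outgoing side, I would note that by the time the forward sweep reaches layer $k$ it has processed only layers $1,2,\dots,k-1$ and has not touched the edges $\cL_k \to \cL_{k+1}$; these still carry the powers assigned by the preceding backward sweep (or by PowerAug if $k=b-1$), so $g_j^{\text{forw}} = g_j^{\text{back}}$ at layer $k$ and the backward rate vector automatically satisfies the forward outgoing constraints. For the incoming side I would argue by induction. The base case $k=2$ is immediate because forward LayerOPT$(1)$ faces the same $g_j$ on $\cL_1$ as the backward sweep did, so it can at least reproduce the backward source-to-$\cL_1$ allocation. For the inductive step, the hypothesis $\overrightarrow{R}_{k-1} \ge \overleftarrow{R}_{k-1}$ gives aggregate dominance of the incoming capacities at $\cL_{k-1}$; combined with the concavity of LayerOPT (Lemma~\ref{lem:concave}) and the layer-connectedness assumption, this allows a PowerAug-style redistribution inside forward LayerOPT$(k-1)$ that makes the incoming capacities dominate the backward ones on every cut $S \subseteq \cL_{k-1}$. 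Max-flow monotonicity then yields $\overrightarrow{R}_k \ge \overleftarrow{R}_k$, closing the induction.

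The main obstacle is precisely this subset-wise lift of aggregate dominance. A bound of the form $\sum_i f_i^{\text{forw}} \ge \sum_i f_i^{\text{back}}$ alone is too weak, since the bipartite min-cut may isolate a subset of $\cL_{k-1}$ whose incoming capacity shrinks in the forward sweep even while the total grows. The layer-connectedness assumption is exactly what rules this out: it lets one ``slide'' any forward-sweep surplus of power through $\cL_{k-2}$ so that every incoming cut at $\cL_{k-1}$ benefits, while concavity guarantees that this redistribution does not decrease the sum-rate. This is the step where the structural assumptions of the algorithm are truly used; the remainder is a routine monotonicity argument on the bipartite max-flow defining LayerOPT.
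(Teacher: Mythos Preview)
Your proposal is considerably more detailed than what the paper actually does. The paper's entire proof of this lemma is one sentence: ``Follows from Lemma~\ref{lem:nondec}.'' In other words, the paper simply invokes the same monotonicity principle established for the two-layer case (that enlarging the effective constraints $f_j$ can only increase the LayerOPT optimum) and declares the multi-layer version to be an immediate consequence.

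Your approach is built on the same underlying idea---monotonicity of LayerOPT in its two constraint families---but you make explicit the induction on $k$, the fact that the outgoing constraints $g_j$ are unchanged between the backward and forward sweeps at each layer, and most importantly the subset-wise versus aggregate dominance issue. That last point is a genuine subtlety that the paper's one-line proof does not address: knowing only $\sum_i f_i^{\text{forw}} \ge \sum_i f_i^{\text{back}}$ is not by itself enough to conclude that the bipartite max-flow increases, and you correctly identify layer-connectedness (and a PowerAug-style redistribution) as the mechanism that lifts aggregate dominance to the cut-wise dominance actually needed. The paper's Lemma~\ref{lem:nondec} argument works in the $K=2$ case precisely because PowerAug there guarantees \emph{pointwise} enlargement of the $f_j$'s; your proposal is essentially the work required to carry that pointwise statement through the multi-layer induction.

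So: your route and the paper's are the same in spirit, but you are filling in nontrivial details that the paper leaves implicit. What your approach buys is an honest accounting of where layer-connectedness is used; what the paper's approach buys is brevity, at the cost of leaving the reader to reconstruct exactly the argument you have sketched.
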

\begin{proof} Follows from Lemma \ref{lem:nondec}.
\end{proof}
 
 \begin{lemma}\label{lem:multifinal} The bottleneck layer rate $\min_{1\le \ell \le K} R_\ell$ is non-decreasing in each iteration of Algorithm~FlowMax-II.
 \end{lemma}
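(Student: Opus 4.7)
The plan is to track how $\min_\ell R_\ell$ evolves across one iteration of Algorithm~FlowMax-II and show that no sub-step reduces it. I would split the analysis according to which branch of Step~III is taken. In the If branch (no bottleneck at the current layer $k$), the only rate that gets updated on the next call to LayerOPT is $R_{k-1}$; because the powers feeding $\cL_{k-1}$ have only been enlarged since the previous visit, Lemma~\ref{lem:nondec} ensures this update is non-decreasing, so the overall minimum cannot shrink.

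The more delicate situation is the Else branch at a bottleneck layer $b$ with current rate $R_b$. A crucial preliminary observation is that the forward sweep enters this branch at the first violating layer, so all already-processed layers $\ell < b$ satisfy $R_\ell \ge R_b$, i.e.\ the current bottleneck value is precisely $R_b$. The algorithm then executes (i) PowerAug on the edges into $\cL_b$, (ii) a backward sweep of LayerOPT calls from $\ell = b-2$ down to $\ell = 2$, and (iii) a forward restart. Invoking Lemma~\ref{lem:multi1}, Lemma~\ref{lem:multi2}, and Lemma~\ref{lem:multi3} in this order gives $R_{b-1} \ge R_b$, $\overleftarrow R_k \ge R_b$ for $k \le b-2$, and $\overrightarrow R_k \ge \overleftarrow R_k$ for $k \le b$, respectively. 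Consequently every $R_\ell$ with $\ell \le b$ is at least $R_b$ at the end of the iteration, while $R_\ell$ for $\ell > b$ is untouched, so $\min_\ell R_\ell$ is at least its pre-iteration value.

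The main obstacle will be ensuring that $R_b$ itself is not silently reduced by PowerAug. The subroutine modifies powers on incoming edges of $\cL_b$, which in principle could spoil the incoming sum-rate into individual nodes of $\cL_b$ and prevent them from sustaining the rate $R_b$ they are currently delivering outward. This is exactly the invariant engineered into PowerAug and formalized in Lemma~\ref{lem:multi1}; once granted, the remaining sub-steps only enlarge the feasible sets of the subsequent LayerOPT instances, so Lemma~\ref{lem:nondec} (combined with the concavity provided by Lemma~\ref{lem:concave}) propagates the lower bound through the backward and forward sweeps. Chaining these monotonicity facts across the entire iteration yields the lemma.
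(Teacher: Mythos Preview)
Your proposal is correct and follows essentially the same approach as the paper: you invoke Lemmas~\ref{lem:multi1}, \ref{lem:multi2}, and \ref{lem:multi3} to bound the backward and forward sweep rates by $R_b$, and use the Lemma~\ref{lem:nondec}-type argument to handle the re-evaluation at layer~$b$ itself. Your treatment is slightly more explicit than the paper's (separately handling the If branch and spelling out the PowerAug invariant), but the core structure is identical.
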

 \begin{proof} We know that whenever a bottleneck layer $\cL_b$ is encountered by the algorithm, (Else condition is satisfied in Step III with sum-rate $R_b$),  then one pass from layer $b-2$ to $2$ and one pass from layer $2$ to $b-1$ is made to update $\overleftarrow{R}_k(new)$ and $\overrightarrow{R}_k(new)$, $k\le b-1$.
 
 From Lemma \ref{lem:multi1}, \ref{lem:multi2}, \ref{lem:multi3}, we know that $\overleftarrow{R}_k(new) \ge R_b$ as well as $\overrightarrow{R}_k(new) \ge R_b$. After this, the subroutine \text{LayerOpt} is run for layer $\cL_b$, and the the updated rate $R_b(new)$ is at least as much as before following the same argument as in Lemma \ref{lem:nondec}. Since $R_b$ is the current minimum sum-rate, the result follows.
 \end{proof}

\section{Simulations}\label{sec:sim}
In this section, we illustrate the numerical performance of our algorithm to maximize the max-flow \eqref{eq:opt:2}. We consider the $2$-layer network shown in Fig.~\ref{fig:eg:nw} and plot the max-flow for various values of $P_s$ and $P_5$ obtained via Algorithm~FlowMax. 
Recall that Fig.~\ref{fig:eg:rate} was generated directly by solving Problem \eqref{eq:opt:2}  using a convex solver while Fig. \ref{fig:eg:ratesim} is obtained by executing Algorithm~FlowMax. It is worthwhile to note that corresponding curves for Fig.~\ref{fig:eg:rate} and Fig. \ref{fig:eg:ratesim} exactly match, where  Algorithm~FlowMax algorithm converged in at most $5$ iterations for each value of $P_5$.

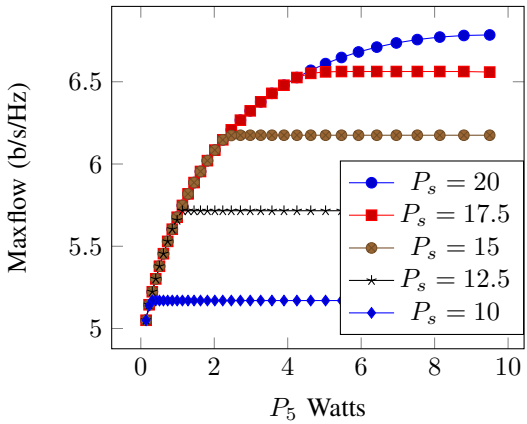
\begin{figure}
\begin{tikzpicture}
\begin{axis}[xlabel={$P_{5}$ Watts}, ylabel={Maxflow (b/s/Hz)}, legend style={at={(1,0.55)}}, scale=.8 ]
\plot coordinates {
(9.5,6.78463) (8.79685,6.78128) (8.14078,6.77167) (7.52865,6.75636) (6.95751,6.73587) (6.42462,6.71065) (5.92742,6.68109) (5.46351,6.64753) (5.03067,6.61029) (4.62681,6.56965) (4.25,6.52584) (3.89842,6.4791) (3.57039,6.42963) (3.26433,6.37761) (2.97876,6.32321) (2.71231,6.26659) (2.46371,6.20788) (2.23175,6.1472) (2.01533,6.08469) (1.81341,6.02044) (1.625,5.95456) (1.44921,5.88714) (1.2852,5.81826) (1.13216,5.74801) (0.989378,5.67645) (0.856155,5.60367) (0.731854,5.52971) (0.615877,5.45465) (0.507667,5.37853) (0.406703,5.30141) (0.3125,5.22335) (0.224606,5.14438) (0.142598,5.0503)
};
\plot coordinates{
(9.5,6.55902) (8.79685,6.56219) (8.14078,6.56224) (7.52865,6.56224) (6.95751,6.56224) (6.42462,6.56224) (5.92742,6.56224) (5.46351,6.56221) (5.03067,6.55997) (4.62681,6.55117) (4.25,6.52573) (3.89842,6.4791) (3.57039,6.42963) (3.26433,6.37761) (2.97876,6.32321) (2.71231,6.26659) (2.46371,6.20788) (2.23175,6.1472) (2.01533,6.08469) (1.81341,6.02044) (1.625,5.95456) (1.44921,5.88714) (1.2852,5.81826) (1.13216,5.74801) (0.989378,5.67645) (0.856155,5.60367) (0.731854,5.52971) (0.615877,5.45465) (0.507667,5.37853) (0.406703,5.30141) (0.3125,5.22335) (0.224606,5.14438) (0.142598,5.0503)
};
\plot coordinates{
(9.5,6.17493) (8.79685,6.17493) (8.14078,6.17493) (7.52865,6.17493) (6.95751,6.17493) (6.42462,6.17493) (5.92742,6.17493) (5.46351,6.17493) (5.03067,6.17493) (4.62681,6.17493) (4.25,6.17493) (3.89842,6.17493) (3.57039,6.17493) (3.26433,6.17493) (2.97876,6.17493) (2.71231,6.17493) (2.46371,6.1733) (2.23175,6.14679) (2.01533,6.08469) (1.81341,6.02044) (1.625,5.95456) (1.44921,5.88714) (1.2852,5.81826) (1.13216,5.74801) (0.989378,5.67645) (0.856155,5.60367) (0.731854,5.52971) (0.615877,5.45465) (0.507667,5.37853) (0.406703,5.30141) (0.3125,5.22335) (0.224606,5.14438) (0.142598,5.0503)
};
\plot coordinates{
(9.5,5.71596) (8.79685,5.71596) (8.14078,5.71596) (7.52865,5.71596) (6.95751,5.71596) (6.42462,5.71596) (5.92742,5.71596) (5.46351,5.71596) (5.03067,5.71596) (4.62681,5.71596) (4.25,5.71596) (3.89842,5.71596) (3.57039,5.71596) (3.26433,5.71596) (2.97876,5.71596) (2.71231,5.71596) (2.46371,5.71596) (2.23175,5.71596) (2.01533,5.71596) (1.81341,5.71596) (1.625,5.71596) (1.44921,5.71596) (1.2852,5.71596) (1.13216,5.71596) (0.989378,5.6584) (0.856155,5.60356) (0.731854,5.52971) (0.615877,5.45465) (0.507667,5.37853) (0.406703,5.30141) (0.3125,5.22335) (0.224606,5.14438) (0.142598,5.0503)
};
\plot coordinates{
(9.5,5.16992) (8.79685,5.16992) (8.14078,5.16992) (7.52865,5.16992) (6.95751,5.16992) (6.42462,5.16992) (5.92742,5.16992) (5.46351,5.16992) (5.03067,5.16992) (4.62681,5.16992) (4.25,5.16992) (3.89842,5.16992) (3.57039,5.16992) (3.26433,5.16992) (2.97876,5.16992) (2.71231,5.16992) (2.46371,5.16992) (2.23175,5.16992) (2.01533,5.16992) (1.81341,5.16992) (1.625,5.16992) (1.44921,5.16992) (1.2852,5.16992) (1.13216,5.16992) (0.989378,5.16992) (0.856155,5.16992) (0.731854,5.16992) (0.615877,5.16993) (0.507667,5.16993) (0.406703,5.16993) (0.3125,5.16993) (0.224606,5.13971) (0.142598,5.0503)
};
\legend{$P_s=20$, $P_s=17.5$,$P_s=15$, $P_s=12.5$, $P_s=10$}
\end{axis}
\end{tikzpicture}
\caption{Max-flow as a function of link power $P_{5}$ \label{fig:eg:ratesim}}
\end{figure}

Next, to model the non-orthogonal links, we once again consider the two-layer network of Fig.~\ref{fig:eg:nw}, and let the edges incident to node $4$ and $5$ have constraints defined by the rate region of a Gaussian multiple access channel, which is polymatroidal. Thus, the change needed in Algorithm~FlowMax is only in Step~III, where additional polymatroidal constraints are imposed on the rates from layer $\mathcal L_1$ to $\mathcal L_2$, without losing out on the concavity of maximization between layer $\mathcal L_1$ and $\mathcal L_2$.  We demonstrate the
throughput performance under additional Gaussian MAC rate constraints 
on nodes $4$ and $5$ in Fig.~\ref{fig:eg:mac}. Fig.~\ref{fig:eg:mac} and Fig.~\ref{fig:eg:ratesim} are comparable for $P_s=20$, and it is worthwhile noting that the max-flow achieved with interfering links is significantly smaller as expected.
 
\begin{figure}[H]
\begin{tikzpicture}
\begin{axis}[xlabel={$P_{5}$ Watts}, ylabel={Maxflow (b/s/Hz)}, legend style={at={(0.9,0.275)}}, scale=1 ]
\plot coordinates{
(9.5,5.16993) (8.79685,5.16993) (8.14078,5.16993) (7.52865,5.16993) (6.95751,5.16993) (6.42462,5.16993) (5.92742,5.16993) (5.46351,5.16993) (5.03067,5.16993) (4.62681,5.16993) (4.25,5.16993) (3.89842,5.16993) (3.57039,5.16993) (3.26433,5.16993) (2.97876,5.16993) (2.71231,5.16993) (2.46371,5.16993) (2.23175,5.16993) (2.01533,5.16993) (1.81341,5.16993) (1.625,5.16993) (1.44921,5.16993) (1.2852,5.16993) (1.13216,5.16852) (0.989378,5.08068) (0.856155,4.99194) (0.731854,4.90236) (0.615877,4.81202) (0.507667,4.72097) (0.406703,4.62927) (0.3125,4.53698) (0.224606,4.44413) (0.142598,4.35077)
};

\plot coordinates{
(9.5,4.91886) (8.79685,4.91886) (8.14078,4.91886) (7.52865,4.91886) (6.95751,4.91886) (6.42462,4.91886) (5.92742,4.91886) (5.46351,4.91886) (5.03067,4.91886) (4.62681,4.91886) (4.25,4.91588) (3.89842,4.9015) (3.57039,4.87705) (3.26433,4.84414) (2.97876,4.80404) (2.71231,4.75778) (2.46371,4.70617) (2.23175,4.64991) (2.01533,4.58955) (1.81341,4.52558) (1.625,4.45841) (1.44921,4.38837) (1.2852,4.31579) (1.13216,4.2409) (0.989378,4.16394) (0.856155,4.08512) (0.731854,4.0046) (0.615877,3.92254) (0.507667,3.83908) (0.406703,3.75434) (0.3125,3.66844) (0.224606,3.58147) (0.142598,3.49352)
};
\plot coordinates{
(9.5,3.61471) (8.79685,3.61471) (8.14078,3.61471) (7.52865,3.61471) (6.95751,3.61471) (6.42462,3.61471) (5.92742,3.61471) (5.46351,3.61471) (5.03067,3.61471) (4.62681,3.61471) (4.25,3.61471) (3.89842,3.61471) (3.57039,3.61471) (3.26433,3.61471) (2.97876,3.61471) (2.71231,3.61471) (2.46371,3.61455) (2.23175,3.60621) (2.01533,3.58678) (1.81341,3.55809) (1.625,3.5216) (1.44921,3.47843) (1.2852,3.42952) (1.13216,3.3756) (0.989378,3.31731) (0.856155,3.25516) (0.731854,3.18961) (0.615877,3.12103) (0.507667,3.04974) (0.406703,2.97602) (0.3125,2.90011) (0.224606,2.82224) (0.142598,2.74258)
};
\legend{{$P_2=9, P_3 = 10$}, {$P_2=5, P_3 = 6$},  {$P_2=3, P_3 = 4$}}
\end{axis}
\end{tikzpicture}
\caption{Throughput with interfering incident links~\label{fig:eg:mac} with $P_s=20$}
\end{figure}
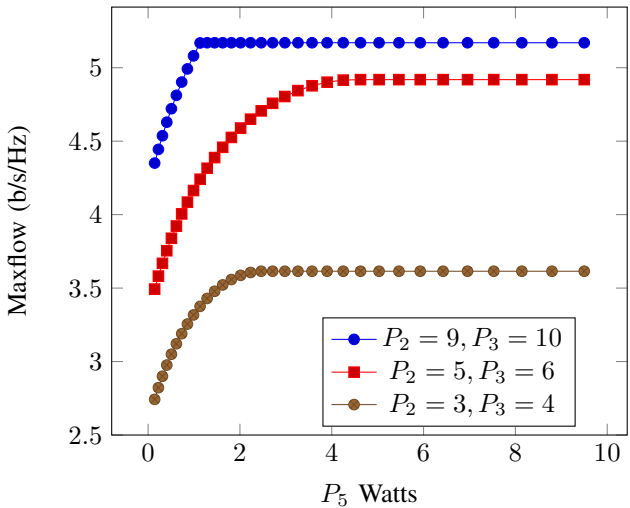

\section{Conclusions} \label{sec:conc}
In this paper, for the first time we propose an online algorithm for an arbitrary communication network that is representable by a directed acyclic graph and where all nodes are powered by EH. We show that that the proposed algorithm is optimal in terms of the competitive ratio, and the optimal competitive ratio is $2$. In the process of analysing the competitive ratio we consider a novel max-flow problem with logarithmic utilities and derive an optimal algorithm for it.

\end{document}